\newcommand{\cD}{{\mathcal D}}
\newcommand{\cK}{{\mathcal K}}
\newcommand{\bR}{{\mathbb R}}
\numberwithin{equation}{section}
\newtheorem{theorem}{Theorem}[section]
\newtheorem{lemma}{Lemma}[section]
\newtheorem{corollary}{Corollary}[section]
\theoremstyle{definition}
\newtheorem{definition}{Definition}[section]
\theoremstyle{remark}
\newtheorem{remark}{Remark}[section]
\author{S.~Albeverio}
\address{Institut f\"{u}r Angewandte Mathematik, Universit\"{a}t Bonn,
Endenicher Allee 60, D-53115 Bonn, Germany; \
SFB 611, Bonn University; IZKS, Bonn University; BiBoS (Bie\-lefeld-Bonn);
Dip. Matematika, Universita di Trento}
\email{albeverio@uni-bonn.de}
\author{O.~S.~Rozanova}
\address{Mathematics and Mechanics Faculty, Moscow State University, Moscow
119992, Russia.}
\email{rozanova@mech.math.msu.su}
\author{V.~M.~Shelkovich}
\address{Department of Mathematics, St.-Petersburg State Architecture and
Civil Engineering University,
2 Krasnoarmeiskaya 4, 190005, St. Petersburg, Russia.}
\email{shelkv@vs1567.spb.edu}
\title[Transport processes in the zero-pressure gas dynamics]
{Transport and concentration processes in the multidimensional zero-pressure gas
dynamics model with the energy conservation law}
\thanks{The authors were supported by DFG Project 436 RUS 113/895. The second and third authors
(O.R. and V.S.) were also supported by the Analytical departmental special program
"The development of scientific potential of the Higher School", project 2.1.1/1399.}
\subjclass[2000]{Primary 35L65; Secondary 35L67, 76L05}
\keywords{Multidimensional system of conservation laws, $\delta$-shocks,
the Rankine--Hugoniot conditions for $\delta$-shocks,
transport and concentration processes}
\date{ }
\begin{document}

\begin{abstract}
We introduce integral identities to define $\delta$-shock wave
type solutions for the multidimensional zero-pressure gas dynamics
$$
\begin{array}{rcl}
\displaystyle
\rho_t + \nabla\cdot(\rho U)&=&0, \\
\displaystyle
(\rho U)_t + \nabla\cdot(\rho U\otimes U)&=&0, \\
\displaystyle
\Big(\frac{\rho|U|^2}{2}+H\Big)_t
+\nabla\cdot\Big(\Big(\frac{\rho|U|^2}{2}+H\Big)U\Big)&=&0, \\
\end{array}
$$
where where $\rho$ is the density, $U\in\bR^n$
is the velocity, $H(x,t)$ is the internal energy, $x\in\bR^n$.
Using these integral identities, the Rankine-Hugoniot conditions for $\delta$-shocks
are obtained.
We derive the balance laws describing mass, momentum, and energy transport
from the area outside the $\delta$-shock wave front onto this front.
These processes are going on in such a way that the total mass, momentum, and energy are conserved
and at the same time mass and energy of the moving $\delta$-shock wave front are increasing quantities.
In addition, the total kinetic energy transfers into the total internal energy.
The process of propagation of $\delta$-shock waves is also described.
These results can be used in modeling of mediums which
can be treated as a {\em pressureless continuum} (dusty gases,
two-phase flows with solid particles or droplets, granular gases).
\end{abstract}

\maketitle

\section{Strong singular solutions and pressureless mediums}
\label{s1}

\subsection{$L^\infty$-type solutions.}\label{s1.1}
Let us recall some classical results. Consider the Cauchy problem
for the system of conservation laws in one dimension space:
\begin{equation}
\label{1.1}
\left\{
\begin{array}{rclrcl}
U_t+\big(F(U)\big)_x&=&0, &&\text{in}\quad \bR\times (0, \ \infty),
\medskip \\
U&=&U^0,                  &&\text{in} \quad \bR\times \{t=0\}, \\
\end{array}
\right.
\end{equation}
where $F:\bR^m \to \bR^m$ is called the {\em flux-function}
associated with (\ref{1.1}); $U^0:\bR \to \bR^m$ are given
vector-functions; $U=U(x,t)=(u_1(x,t),\dots,u_m(x,t))$
is the unknown function with value in $\bR^m$, and components $u_j(x,t)$,
$j=1,\dots,m$; $x\in\bR$, $t\ge 0$.

As is well known, even in the case of smooth (and, certainly, in the case of discontinuous)
initial data $U^0(x)$, in general, does not exist any smooth and global in time solution
of system (\ref{1.1}). As noted in the Evans' book~\cite[11.1.1.]{Evans},
``the great difficulty in this subject is discovering a proper notion of weak solution
for the initial problem (\ref{1.1})''. ``We must devise some way to interpret
a less regular function as somehow ``solving'' this initial-value
problem''~\cite[3.4.1.a.]{Evans}.
But it is a well known that a partial differential equation
may not make sense even if $U$ is differentiable.
``However, observe that if we {\em temporarily} assume $U$ is smooth, we can as follows
rewrite, so that the resulting expression does not directly involve the derivatives
of $U$''~\cite[3.4.1.a.]{Evans}. ``The idea is to multiply the partial differential equation
in (\ref{1.1}) by a smooth function $\varphi$ and then to integrate by parts, thereby
transferring the derivatives onto $\varphi$''~\cite[3.4.1.a.;11.1.1.]{Evans}.
Following this suggestion we shall derive the integral identity which gives the following definition
of an $L^\infty$-{\em generalized solution} of the Cauchy problem (\ref{1.1}):
$U\in L^\infty\big(\bR\times (0,\infty);\bR^m\big)$ is called a
{\em generalized solution} of the Cauchy problem (\ref{1.1})
if the integral identity
\begin{equation}
\label{2.1}
\int_{0}^{\infty}\int \Big(U\cdot{\widetilde\varphi}_{t}
+F(U)\cdot{\widetilde\varphi}_{x}\Big)\,dx\,dt
+\int U^0(x)\cdot{\widetilde\varphi}(x,0)\,dx=0
\end{equation}
holds for all compactly supported smooth test vector-functions
${\widetilde\varphi}: \bR\times [0, \infty) \to \bR^m$, where \
$\cdot$ \ is the scalar product of vectors, and $\int f(x)\,dx$
denotes the improper integral $\int_{-\infty}^{\infty}f(x)\,dx$.
``This identity, which we derived supposing $U$ to be a smooth
solution makes sense if $U$ is merely
bounded''~\cite[11.1.1.]{Evans}.

\begin{theorem}
\label{th0*} {\rm (see, e.g.,~\cite[11.1.1.]{Evans})} Let
$\Omega\subset \bR\times (0, \infty)$ be a region cut by a smooth
curve $\Gamma$ into a left- and right-hand parts $\Omega_{\mp}$. Let
us assume that the generalized solution $U$ of {\rm (\ref{1.1})} is
smooth on either side of the curve $\Gamma$ along which $U$ has
simple jump discontinuities. Then the Rankine--Hugoniot condition
\begin{equation}
\label{2.1*}
\big[F(U)\big]_{\Gamma}\nu_1+\big[U\big]_{\Gamma}\nu_2=0,
\end{equation}
holds along $\Gamma$, where ${\bf n}=(\nu_1,\nu_2)$ is the unit normal
to the curve $\Gamma$ pointing from $\Omega_{-}$ into $\Omega_{+}$,
$$
[F(U)]\stackrel{def}{=}F(U_{-})-F(U_{+}),
$$
$[U]\stackrel{def}{=}U_{-}-U_{+}$ are the jumps in $F(U)$ and in $U$ across
the discontinuity curve $\Gamma$, respectively.
$U_{\mp}$ are respective the left- and right-hand values of $U$ on
$\Gamma$.

If $\Gamma=\{(x,t): x=\phi(t)\}$, where $\phi(\cdot)\in C^1(0,+\infty)$,
then
\begin{equation}
\label{31}
{\bf n}=(\nu_1,\nu_2)
=\frac{1}{\sqrt{1+(\dot\phi_i(t))^2}}\big(1,-\dot\phi_i(t)\big),
\end{equation}
and {\rm (\ref{2.1*})} reads
\begin{equation}
\label{2.1**}
\big[F(U)\big]_{\Gamma}=\dot\phi(t)\big[U\big]_{\Gamma},
\end{equation}
where $\dot{(\cdot)}=\frac{d}{dt}(\cdot)$.
\end{theorem}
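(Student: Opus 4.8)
The plan is to derive the Rankine–Hugoniot condition (2.1*) directly from the integral identity (2.1), by localizing the test functions to a neighborhood of the discontinuity curve $\Gamma$ and exploiting the fact that $U$ is smooth (hence a classical solution) on each side. I would work inside the region $\Omega$, split into $\Omega_-$ and $\Omega_+$, and consider test vector-functions $\widetilde\varphi$ compactly supported in the interior of $\Omega$, so that the boundary term $\int U^0\cdot\widetilde\varphi(x,0)\,dx$ in (2.1) vanishes. This reduces the identity to
\[
\int_0^\infty\!\!\int\Big(U\cdot\widetilde\varphi_t+F(U)\cdot\widetilde\varphi_x\Big)\,dx\,dt=0,
\]
with the integration effectively over $\Omega_-\cup\Omega_+$.

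The key step is to split the integral into its parts over $\Omega_-$ and $\Omega_+$ and integrate by parts in each, moving the derivatives off $\widetilde\varphi$ and back onto $U$ and $F(U)$. On the interior of each subregion $U$ is smooth and satisfies (1.1) classically, so the resulting volume integrands $U_t+\big(F(U)\big)_x$ vanish identically. What survives is purely a boundary contribution along $\Gamma$, picked up from each side with opposite orientations of the normal. Writing $U_\mp$ and $F(U_\mp)$ for the one-sided limits and ${\bf n}=(\nu_1,\nu_2)$ for the unit normal pointing from $\Omega_-$ into $\Omega_+$, the divergence theorem yields a line integral along $\Gamma$ of the form
\[
\int_\Gamma\Big(\big[F(U)\big]_\Gamma\,\nu_1+\big[U\big]_\Gamma\,\nu_2\Big)\cdot\widetilde\varphi\,dS=0,
\]
where the jumps $[\,\cdot\,]_\Gamma$ are exactly as defined in the statement.

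The final step is the standard localization argument: since this line integral vanishes for every smooth compactly supported $\widetilde\varphi$, and $\Gamma$ is a smooth curve with $U$ having well-defined one-sided limits, the fundamental lemma of the calculus of variations forces the integrand to vanish pointwise along $\Gamma$, giving (2.1*). For the second assertion, I would substitute the explicit normal (3.1) for $\Gamma=\{x=\phi(t)\}$: the common factor $1/\sqrt{1+\dot\phi^2}$ cancels, and $\nu_1=1$, $\nu_2=-\dot\phi(t)$ (up to the normalization) reduces (2.1*) to $\big[F(U)\big]_\Gamma=\dot\phi(t)\big[U\big]_\Gamma$, which is (2.1**).

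The main obstacle I anticipate is bookkeeping the orientations and signs correctly when applying the divergence theorem on the two subregions — ensuring that the one-sided boundary contributions from $\Omega_-$ and $\Omega_+$ combine into the jump $[\,\cdot\,]_\Gamma=(\cdot)_- -(\cdot)_+$ with the normal pointing consistently from $\Omega_-$ to $\Omega_+$. A secondary technical point is justifying the boundary integration by parts when $U$ is only assumed smooth up to $\Gamma$ from each side (rather than globally), which is handled by treating $\Omega_\mp$ as domains with piecewise-smooth boundary and applying the classical Gauss–Green formula on each closure separately.
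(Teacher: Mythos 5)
Your proposal is correct and coincides with the standard argument: the paper gives no proof of its own for this theorem but cites Evans~\cite[11.1.1.]{Evans}, whose proof is exactly the localization--splitting--Gauss--Green--variational-lemma chain you describe (including the preliminary step of testing inside each $\Omega_{\mp}$ to see that $U$ solves (\ref{1.1}) classically there). The sign bookkeeping you flag works out because the outward normal of $\Omega_{-}$ along $\Gamma$ is ${\bf n}$ while that of $\Omega_{+}$ is $-{\bf n}$, which is precisely what produces the jump $[\,\cdot\,]=(\cdot)_{-}-(\cdot)_{+}$.
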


It is well known that if $U\in L^\infty\big(\bR\times (0,\infty);\bR^m\big)$
is a generalized solution of the Cauchy problem (\ref{1.1}) compactly
supported with respect to~$x$, then the integral of the solution on
the whole space
\begin{equation}
\label{7.0}
\int U(x,t)\,dx=\int U^0(x)\,dx, \qquad t\ge 0
\end{equation}
is independent of time. These integrals can express the conservation
laws of quantities like the {\em total area, mass, momentum, energy}, etc.

\subsection{$\delta$-shocks.}\label{s1.2}
It is well known that there are ``nonclassical'' situations where, in
contrast to Lax's and Glimm's classical results, the Cauchy problem for
a system of conservation laws {\em either does not possess a weak $L^{\infty}$-solution
or possesses it for some particular initial data}. In order to solve the Cauchy
problem in these ``nonclassical'' situations, it is necessary to seek solutions of
this Cauchy problem in class of singular solutions called {\em $\delta$-shocks}.
Roughly speaking, a {\em $\delta$-shock} is a solution such that its
components {\em contain Dirac delta-functions}.

It is customary to assume that a $\delta$-shock wave type solution was first
described by Korchinski in his unpublished dissertation~\cite{Korchinski} in 1977.
However, in fact, a solution of this type as well as the Rankine-Hugoniot condition
for the one-dimensional continuity equation were already derived from
physical considerations in the book~\cite[\S7,\S12]{Zeldovich-Mishkis} in 1973.
Next, in 1979, A.~N.~Kraiko~\cite{Kraiko} considered {\em a new type
of discontinuity surface which are to be introduced in certain models of media having
no inherent pressure} and obtained the Rankine-Hugoniot conditions for them. The
system under consideration in~\cite{Kraiko} is the zero-pressure gas dynamics
described by the system of equations:
\begin{equation}
\label{11.1-0***}
\begin{array}{rcl}
\displaystyle
\rho_t+\big(\rho u\big)_x&=&0, \\
\displaystyle
(\rho u)_t+\big(\rho u^2\big)_x&=&0, \\
\displaystyle
\Big(\frac{\rho u^2}{2}+\rho\tau\Big)_t+\Big(\Big(\frac{\rho u^2}{2}+\rho\tau\Big)u\Big)_x&=&0, \\
\end{array}
\end{equation}
where $\rho(x,t)\ge 0$ is the density, $u(x,t)$ is the velocity, $\rho(x,t)u(x,t)$ is corresponding
momentum, $\tau(x,t)$ is the {\em internal energy per unit mass}, $x\in\bR$.
The last system can be derived from the Euler equations of nonisentropic gas dynamics
\begin{equation}
\label{11.1-2-p}
\rho_t+\big(\rho u\big)_x=0, \quad (\rho u)_t+\big(\rho u^2+p\big)_x=0,
\quad (\rho E)_t+\big((\rho E+p)u\big)_x=0,
\end{equation}
if we set $p=0$, where $E=\frac{\rho u^2}{2}+\tau$ is {\em total
energy per unit mass}.

According to~\cite[page~502]{Kraiko}, to construct a solution for
system (\ref{11.1-0***}) for arbitrary initial data, we need
discontinuities which would be different from classical ones and
{\em carry mass, impulse and energy}. As it turned out these
nonclassical discontinuities are {\em $\delta$-shocks}.

The theory of $\delta$-shocks has been intensively developed
in the last fifteen years (for example, see~\cite{Al-S},~\cite{B}--~\cite{E-R-S},
~\cite{Li-Zh1}--~\cite{Ned-Oberg},~\cite{umn}--~\cite{Y-1} and the references therein).
Moreover, recently, in~\cite{Pan-S1}, a
concept of $\delta^{(n)}$-shock wave type solutions was introduced,
$n=1,2,\dots$. It is a {\it new type of singular solution} of a system of
conservation laws such that its components contain delta functions and
their derivatives up to $n$-th order. In~\cite{Pan-S1},~\cite{S5}, the
theory of {\em $\delta'$-shocks} was established. The results~\cite{Pan-S1} and
~\cite{S5} show that systems of conservation laws can
develop solutions not only of the type of Dirac measures (as in the case of $\delta$-shocks)
but also the type of derivatives of such measures.

The above-mentioned singular solutions do not satisfy the standard integral identities
of the type (\ref{2.1}). To define them we use special integral identities and derive
special Rankine--Hugoniot conditions.
These solutions are connected with {\em transport and concentration processes}
~\cite{Al-S},~\cite{Chen-Liu2},~\cite{Pan-S1},~\cite{umn},~\cite{S8}.

In the numerous papers cited above $\delta$-shocks were studied for the
{\em system of zero-pressure gas dynamics}:
\begin{equation}
\label{g-2}
\rho_t + \nabla\cdot(\rho U)=0, \qquad
(\rho U)_t + \nabla\cdot(\rho U\otimes U)=0,
\end{equation}
where $\rho=\rho(x,t)\ge 0$ is the density, $U=(u_1(x,t),\dots,u_n(x,t))\in\bR^n$
is the velocity,
$\nabla=\big(\frac{\partial}{\partial x_1},\dots,\frac{\partial}{\partial x_n}\big)$,
\ $\cdot$ \ is the scalar product of vectors, $\otimes$ is the usual tensor
product of vectors.

The system of zero-pressure gas dynamics (\ref{g-2}) has a physical context and is used in applications.
This system can be considered as a model of the ``sticky particle dynamics'' and was used, e.g.,
to describe the formation of large-scale structures of the universe~\cite{San-Z},~\cite{Z},
for modeling the formation and evolution of traffic jams~\cite{B-D-D-R},
for modeling non-classical shallow water flows~\cite{Oc-Oc}.
Nonlinear equations (in particular, zero-pressure gas dynamics) admitting
$\delta$-shock wave type solutions are appropriate
for modeling and studying singular problems like movement of multiphase
media (dusty gases, two-phase flows with solid particles or droplets).
The presence of particles or droplets may drastically modify flow
parameters. Moreover, a large number of phenomena that are absent in
pure gas flow is inherent in two-phase flows. Among them there are
local accumulation and focusing of particles, inter-particle and
particle-wall collisions resulting in particle mixing and dispersion,
surface erosion due to particle impacts, and particle-turbulence
interactions which govern the dispersion and concentration heterogeneities
of inertial particles. The dispersed phase is usually treated mathematically
as a {\em pressureless continuum}. Models of such media were discussed
in the papers~\cite{Kraiko}--~\cite{Kraiko-2},~\cite{Osiptsov-1}
--\cite{Osiptsov-3}.
Equations admitting $\delta$-shocks can also be used for modeling granular gases.
Granular gases are dilute assemblies of hard spheres which lose energy at collisions.
In such gases local density excesses and local pressure falls~\cite{F-Meerson-2},~\cite{F-Meerson-3}.
In~\cite{F-Meerson-2},~\cite{F-Meerson-3}, the following hydrodynamics system
of granular gas
$$
\rho_t+(\rho u)_x=0, \quad \rho(u_t+uu_x)=-(\rho T)_x,
\quad T_t+uT_x=-(\gamma-1)Tu_x-\Lambda\rho T^{3/2},
$$
was studied, where $\rho$ is the gas density, $u$ is the velocity, $T$ is the temperature,
$\gamma$ is the adiabatic index, $p=\rho T$ is the pressure. It was shown that
for {\em non-zero pressure} this system admits a solution which contains a $\delta$-function in
the density $\rho$.

\subsection{Main results.}\label{s1.3}
As it follows from~\cite{Kraiko}--~\cite{Kraiko-2}, for modeling media which
can be considered as {\em having no pressure} we must take into account energy
transport. In the above-cited papers zero-pressure gas dynamics was studied only
in the form (\ref{g-2}). Therefore, we need to study $\delta$-shocks in zero-pressure gas dynamics
\begin{equation}
\label{g-2-en}
\begin{array}{rcl}
\displaystyle
\rho_t + \nabla\cdot(\rho U)&=&0, \\
\displaystyle
(\rho U)_t + \nabla\cdot(\rho U\otimes U)&=&0, \\
\displaystyle
\Big(\frac{\rho|U|^2}{2}+H\Big)_t
+\nabla\cdot\Big(\Big(\frac{\rho|U|^2}{2}+H\Big)U\Big)&=&0, \\
\end{array}
\end{equation}
where $H(x,t)$ is the internal energy, $|U|^2=\sum_{k=1}^nu_k^2$. This system
is obtained by adding an energy conservation law to zero-pressure gas dynamics
(\ref{g-2}). As distinct from (\ref{11.1-0***}) it is more convenient
for us to consider as a variable $H$ instead of $H=\rho \tau$, where
$\tau$ is the internal energy per unit mass.
The reason is that since for singular solution $\tau(x,t)$ and $\rho(x,t)$
must contain $\delta$-functions, it is impossible to define the product $\tau(x,t)\rho(x,t)$.

Under the second thermodynamics law it is natural to supplement the
system (\ref{11.1-0***}) with a state equation $\tau=\tau(T)$, where
$T$ is the temperature.  For \eqref{g-2-en} the natural state
equation is $H=H(\rho,T),$ moreover, $H(0,T)=0$.

In Sec.~\ref{s2}, we introduce Definition~\ref{g-de3-1-en} of $\delta$-shock
wave type solutions for system (\ref{g-2-en}). Next, using this definition,
by Theorem~\ref{th1-10-en} we derive the corresponding Rankine-Hugoniot
conditions for $\delta$-shocks (\ref{g-51-10*-en}). These Rankine-Hugoniot
conditions are the direct analog of those that were introduced by A.~N.~Kraiko~\cite{Kraiko}.

In Sec.~\ref{s3}, we show that $\delta$-shocks are related with the
{\em transport processes} of mass, momentum and energy.
According to Theorems~\ref{g-th5-en},~\ref{e-th4-en}, the mass, momentum
and energy {\em transport processes} between the area outside of the moving
$\delta$-shock wave front and this front {\em are going on}
such that the total mass, momentum and energy are independent of time.
Moreover, the {\em mass and energy concentration processes} takes place on
the $\delta$-shock wave front.

\section{$\delta$-shock type solutions and the Rankine--Hugoniot conditions}
\label{s2}

\subsection{$\delta$-shock type solutions.}\label{s2.1}
Throughout the paper we shall systematically use some results recalled in Appendix~\ref{s6}.
Let $\Gamma=\bigl\{(x,t): S(x,t)=0\bigr\}$ be a hypersurface of codimension~1
in the upper half-space $\{(x,t): x\in \bR^n, \ t\in [0,\infty)\}\subset\bR^{n+1}$,
$S\in C^{\infty}(\bR^{n}\times[0,\infty))$, with $\nabla S(x,t)\bigr|_{S=0}\neq 0$
for any fixed $t$, where
$\nabla=\big(\frac{\partial}{\partial x_1},\dots, \frac{\partial}{\partial x_n}\big)$.
Let $\Gamma_t=\bigl\{x\in \bR^n: S(x,t)=0\bigr\}$ be a moving surface in $\bR^n$.
Denote by $\nu$ the unit space normal to the surface $\Gamma_t$ pointing
(in the positive direction) from $\Omega^{-}_{t}=\{x\in \bR^{n}: S(x,t)<0\}$
to $\Omega^{+}_{t}=\{x\in \bR^{n}: S(x,t)>0\}$ such that
$\nu_j=\frac{S_{x_j}}{|\nabla S|}$, \ $j=1,\dots,n$.
The direction of the vector $\nu$ coincides with the direction in which the
function $S$ increases, i.e., inward the domain $\Omega^{+}_t$.
The time component of the normal vector $-G=\frac{S_t}{|\nabla S|}$ is the
{\em velocity of the wave front} $\Gamma_t$ along the space normal $\nu$.

For system (\ref{g-2-en}) we consider the {\em $\delta$-shock type initial data}
\begin{equation}
\label{5.0-10-en}
\begin{array}{rcl}
\displaystyle
\bigl(U^0(x), \rho^0(x), H^0(x), \,x\in \bR^{n}; \, U^{0}_{\delta}(x), \, x\in \Gamma_0\bigr), &&\\
\displaystyle
\text{where} \quad \rho^0(x)&=&{\widehat \rho}^0(x)+e^0(x)\delta(\Gamma_0), \\
\displaystyle
H^0(x)&=&{\widehat H}^0(x)+h^0(x)\delta(\Gamma_0),
\end{array}
\end{equation}
such that $U^0\in L^\infty\big(\bR^n;\bR^n\big)$,
${\widehat \rho}^0,{\widehat H}^0 \in L^\infty\big(\bR^n;\bR\big)$,
$e^0,h^0\in C(\Gamma_0)$, \ $\Gamma_0=\bigl\{x: S^0(x)=0\bigr\}$ is the
initial position of the $\delta$-shock front,
$\nabla S^0(x)\bigr|_{S^0=0}\neq 0$, \ $U^{0}_{\delta}(x)$, $x\in \Gamma_0$,
is the {\em initial velocity} of the $\delta$-shock, $\delta(\Gamma_0)$
($\equiv\delta(S^0)$) is the Dirac delta function concentrated on the
surface $\Gamma_0$ defined by (\ref{g-106}):
$$
\big\langle \delta(S_0), \ \varphi(x) \big\rangle=\int_{\Gamma_0}\varphi(x)\,d\Gamma_0,
\quad \forall \, \varphi\in {\cD}(\bR^n),
$$
$d\Gamma_0$ is the surface measure on the surface $\Gamma_0$.

Similarly to~\cite[Definition~9.1.]{umn} we introduce the following definition

\begin{definition}
\label{g-de3-1-en} \rm
A triple of distributions $(U,\rho, H)$ and a hypersurface $\Gamma$,
where $\rho(x,t)$ and $H(x,t)$ have the form of the sum
$$
\rho(x,t)={\widehat \rho}(x,t)+e(x,t)\delta(\Gamma),
\quad
H(x,t)={\widehat H}(x,t)+h(x,t)\delta(\Gamma),
$$
and
$U\in L^\infty\big(\bR^n\times(0,\infty);\bR^n\big)$,
${\widehat \rho},{\widehat H} \in L^\infty\big(\bR^n\times(0,\infty);\bR\big)$,
$e,h\in C(\Gamma)$, is called a {\em $\delta$-shock wave type solution}
of the Cauchy problem (\ref{g-2-en}), (\ref{5.0-10-en}) if the integral identities
\begin{equation}
\label{g-4.0-10-en}
\begin{array}{rcl}
\displaystyle
\int_{0}^{\infty}\int
{\widehat \rho}\Big(\varphi_{t} + U\cdot\nabla\varphi\Big)\,dx\,dt
+\int_{\Gamma}e \frac{\delta\varphi}{\delta t}\frac{\,d\Gamma}{\sqrt{1+G^2}}
\qquad\qquad && \medskip \\
\displaystyle
+\int {\widehat\rho}^0(x)\varphi(x,0)\,dx
+\int_{\Gamma_0}e^0(x)\varphi(x,0)\,d\Gamma_{0}&=&0, \medskip \\
\displaystyle
\int_{0}^{\infty}\int
{\widehat \rho}U\Big(\varphi_{t} + U\cdot\nabla\varphi\Big)\,dx\,dt
+\int_{\Gamma}e U_{\delta}\frac{\delta\varphi}{\delta t}\frac{\,d\Gamma}{\sqrt{1+G^2}}
\qquad\quad && \medskip \\
\displaystyle
+\int U^0(x){\widehat\rho}^0(x)\varphi(x,0)\,dx
+\int_{\Gamma_0}e^0(x)U_{\delta}^0(x)\varphi(x,0)\,d\Gamma_{0}&=&0, \medskip \\
\displaystyle
\int_{0}^{\infty}\int \bigg(\Big(\frac{{\widehat \rho}|U|^2}{2}+{\widehat H}\Big)\varphi_{t}
+\Big(\frac{{\widehat \rho}|U|^2}{2}+{\widehat H}\Big)U\cdot\nabla\varphi\bigg)\,dx\,dt
\qquad && \medskip \\
\displaystyle
+\int_{\Gamma}\Big(\frac{e|U_{\delta}|^2}{2}+h\Big)
\frac{\delta\varphi}{\delta t}\frac{\,d\Gamma}{\sqrt{1+G^2}}
\qquad\qquad\qquad\qquad && \medskip \\
\displaystyle
+\int\Big(\frac{{\widehat \rho^0(x)}|U^0(x)|^2}{2}+{\widehat H}^0(x)\Big)\varphi(x,0)\,dx
\qquad\quad && \medskip \\
\displaystyle
\qquad
+\int_{\Gamma_0}\Big(\frac{e^0(x)|U_{\delta}^0(x)|^2}{2}+h^0(x)\Big)\varphi(x,0)\,d\Gamma_{0}&=&0, \\
\end{array}
\end{equation}
hold for all $\varphi \in {\cD}(\bR^n\times [0, \infty))$.
Here $\int f(x)\,dx$ denotes the improper integral $\int_{\bR^n}f(x)\,dx$;
$d\Gamma$ and $d\Gamma_0$ are the surface measures on the surfaces $\Gamma$ and $\Gamma_0$, respectively;
\begin{equation}
\label{g-4.0-10*}
U_{\delta}=\nu G=-\frac{S_t\nabla S}{|\nabla S|^2}
\end{equation}
is the $\delta$-shock velocity, $\nu$ is the unit space normal to the surface $\Gamma_t$
introduced above; $-G=\frac{S_t}{|\nabla S|}$,
$\frac{\delta\varphi}{\delta t}$ is the $\delta$-derivative with
respect to the time variable (\ref{g-74});
$\delta(\Gamma)$ is the Dirac delta function concentrated on the
surface $\Gamma$ defined by (\ref{g-106}):
$$
\big\langle \delta(S), \ \varphi(x,t) \big\rangle
=\int_{-\infty}^{\infty}\int_{\Gamma_t}\varphi(x,t)\,d\Gamma_t\,dt
=\int_{\Gamma}\varphi(x,t)\frac{\,d\Gamma}{\sqrt{1+G^2}},
\quad \forall \, \varphi\in {\cD}(\bR^n\times\bR).
$$
\end{definition}

In view of (\ref{g-4.0-10*}), the $\delta$-derivative in (\ref{g-4.0-10-en})
can be rewritten as the Lagrangian derivative:
$$
\frac{\delta\varphi}{\delta t}=\frac{\partial \varphi}{\partial t}
+G\frac{\partial\varphi}{\partial\nu}
=\frac{\partial \varphi}{\partial t}+U_{\delta}\cdot\nabla \varphi
=\frac{D\varphi}{Dt}.
$$

\subsection{Rankine--Hugoniot conditions.}\label{s2.2}
Using Definition~\ref{g-de3-1-en}, we derive the {\em $\delta$-shock
Rankine--Hugoniot conditions} for system (\ref{g-2-en}).

\begin{theorem}
\label{th1-10-en}
Let us assume that $\Omega\subset \bR^n\times (0, \infty)$
is a region cut by a smooth hypersurface $\Gamma=\bigl\{(x,t): S(x,t)=0\bigr\}$
into left- and right-hand parts $\Omega^{\mp}=\{(x,t): \mp S(x,t)>0\}$.
Let $(U,\rho,H)$, $\Gamma$ be a $\delta$-shock wave type solution of system
{\rm (\ref{g-2-en})} {\rm(}in the sense of Definition~{\rm\ref{g-de3-1-en})},
and suppose that $U,\rho,H$ are smooth in $\Omega^{\pm}$ and have one-sided
limits $U^{\pm}$, ${\widehat \rho}^{\pm}$, $H^{\pm}$ on $\Gamma$. Then the
{\em Rankine--Hugoniot conditions for the $\delta$-shock}
\begin{equation}
\label{g-51-10*-en}
\begin{array}{rcl}
\displaystyle
\frac{\delta e}{\delta t}+\nabla_{\Gamma_t}\cdot(eU_{\delta})&=&
\displaystyle
\bigl([\rho U]-[\rho]U_{\delta}\bigr)\cdot\nu, \\
\displaystyle
\frac{\delta (eU_{\delta})}{\delta t}
+\nabla_{\Gamma_t}\cdot(eU_{\delta}\otimes U_{\delta})&=&
\displaystyle
\bigl([\rho U\otimes U]-[\rho U]U_{\delta}\bigr)\cdot\nu, \medskip \\
\displaystyle
\frac{\delta}{\delta t}\Big(\frac{e|U_{\delta}|^2}{2}+h\Big) \qquad\qquad\qquad && \\
\displaystyle
\qquad\qquad
+\nabla_{\Gamma_t}\cdot\Big(\Big(\frac{e|U_{\delta}|^2}{2}+h\Big)U_{\delta}\Big)&=&
\displaystyle
\biggl(\Big[\Big(\frac{\rho |U|^2}{2}+H\Big)U\Big] \medskip \\
\displaystyle
&&\qquad\quad
\displaystyle
-\Big[\frac{\rho |U|^2}{2}+H\Big]U_{\delta}\biggr)\cdot\nu, \\
\end{array}
\end{equation}
hold on the discontinuity hypersurface $\Gamma$, where
$\bigl[f(U,\rho,H)\bigl]=f(U^{-},\rho^{-},H^{-})-f(U^{+},\rho^{+},H^{+})$
is the jump of the function $f(U,\rho,H)$ across the
discontinuity hypersurface $\Gamma$, \
$\frac{\delta}{\delta t}$ is the $\delta$-derivative {\rm(\ref{g-74})}
with respect to $t$, and $\nabla_{\Gamma_t}$ is defined by {\rm(\ref{g-74})},
{\rm(\ref{g-74-3})}.
\end{theorem}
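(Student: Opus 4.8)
The plan is to read off the three conditions in \eqref{g-51-10*-en} directly from the three integral identities of Definition~\ref{g-de3-1-en}, by transferring every derivative off the test function $\varphi$. For each identity the regular ($L^\infty$) part is integrated by parts separately over $\Omega^-$ and $\Omega^+$ and produces, after the interior terms are killed by the classical equations \eqref{g-2-en}, the jump terms on the right of \eqref{g-51-10*-en}; the singular part concentrated on $\Gamma$ is integrated by parts along the moving surface and produces the surface transport operators on the left; the prescribed initial integrals cancel the boundary contributions at $t=0$; and arbitrariness of $\varphi$ closes the argument. I treat the mass identity in detail, since the momentum and energy identities follow verbatim with $\widehat\rho U$ replaced by $\widehat\rho U\otimes U$ and by $\big(\tfrac{\widehat\rho|U|^2}{2}+\widehat H\big)U$, and the singular weight $e$ replaced by $eU_\delta$ and by $\tfrac{e|U_\delta|^2}{2}+h$.

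First I would write the bulk integrand as a space--time divergence, $\widehat\rho\varphi_t+\widehat\rho U\cdot\nabla\varphi=(\widehat\rho U,\widehat\rho)\cdot\nabla_{x,t}\varphi$ with $\nabla_{x,t}=(\nabla,\partial_t)$, split $\int_0^\infty\!\int(\cdots)=\int_{\Omega^-}+\int_{\Omega^+}$, and apply the divergence theorem on each piece. The interior terms $\int_{\Omega^\pm}\varphi\,\mathrm{div}_{x,t}(\widehat\rho U,\widehat\rho)$ vanish because $U,\rho,H$ solve \eqref{g-2-en} classically off $\Gamma$, leaving boundary integrals over $\Gamma$ and over $\{t=0\}$. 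From $\nu_j=S_{x_j}/|\nabla S|$ and $-G=S_t/|\nabla S|$ the unit space--time normal to $\Gamma$ pointing from $\Omega^-$ into $\Omega^+$ is $\mathbf{n}=\tfrac{1}{\sqrt{1+G^2}}(\nu,-G)$; since the outward normals of $\Omega^\mp$ on $\Gamma$ are $\pm\mathbf{n}$, the two $\Gamma$-integrals combine into $\int_\Gamma\varphi\,[(\widehat\rho U,\widehat\rho)]\cdot\mathbf{n}\,d\Gamma$ with $[\,\cdot\,]$ the minus-minus-plus jump. Using $(\widehat\rho U,\widehat\rho)\cdot\mathbf{n}=\tfrac{1}{\sqrt{1+G^2}}([\rho U]\cdot\nu-G[\rho])$ (where $[\rho]=\widehat\rho^--\widehat\rho^+$), the measure identity $\int_\Gamma f\,\tfrac{d\Gamma}{\sqrt{1+G^2}}=\int\!\int_{\Gamma_t}f\,d\Gamma_t\,dt$ from Definition~\ref{g-de3-1-en}, and $G=U_\delta\cdot\nu$ (from \eqref{g-4.0-10*}), this becomes $\int\!\int_{\Gamma_t}\varphi\,([\rho U]-[\rho]U_\delta)\cdot\nu\,d\Gamma_t\,dt$, exactly the right-hand side of the first condition. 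The $\{t=0\}$ boundary integrals reduce to $-\int\widehat\rho^0\varphi(x,0)\,dx$ and are cancelled by the prescribed initial term.

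Next I would handle the singular term $\int_\Gamma e\tfrac{\delta\varphi}{\delta t}\tfrac{d\Gamma}{\sqrt{1+G^2}}=\int_0^\infty\!\int_{\Gamma_t}e\tfrac{D\varphi}{Dt}\,d\Gamma_t\,dt$, where $\tfrac{\delta}{\delta t}=\tfrac{D}{Dt}$ is the material derivative along the surface velocity $U_\delta=G\nu$ of \eqref{g-74}. Applying the surface transport theorem $\tfrac{d}{dt}\int_{\Gamma_t}f\,d\Gamma_t=\int_{\Gamma_t}\big(\tfrac{\delta f}{\delta t}+f\,\nabla_{\Gamma_t}\cdot U_\delta\big)\,d\Gamma_t$ to $f=e\varphi$, using the Leibniz rule $\tfrac{\delta(e\varphi)}{\delta t}=\varphi\tfrac{\delta e}{\delta t}+e\tfrac{D\varphi}{Dt}$, and integrating in $t$ (the endpoint at $t=\infty$ vanishes by compact support), I would obtain
\[
\int_{0}^{\infty}\!\!\int_{\Gamma_t}e\frac{D\varphi}{Dt}\,d\Gamma_t\,dt
=-\int_{\Gamma_0}e^0\varphi(x,0)\,d\Gamma_0
-\int_{0}^{\infty}\!\!\int_{\Gamma_t}\varphi\Big(\frac{\delta e}{\delta t}+\nabla_{\Gamma_t}\cdot(eU_\delta)\Big)\,d\Gamma_t\,dt ,
\]
where I used that $\nabla_{\Gamma_t}\cdot(eU_\delta)=e\,\nabla_{\Gamma_t}\cdot U_\delta$ because $U_\delta$ is purely normal while $\nabla_{\Gamma_t}e$ is tangential, so that $(\nabla_{\Gamma_t}e)\cdot U_\delta=0$. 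The boundary term again cancels the prescribed initial integral $\int_{\Gamma_0}e^0\varphi(x,0)\,d\Gamma_0$.

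Collecting all contributions, the first identity reduces to
\[
\int_{0}^{\infty}\!\!\int_{\Gamma_t}\varphi\Big(([\rho U]-[\rho]U_\delta)\cdot\nu-\frac{\delta e}{\delta t}-\nabla_{\Gamma_t}\cdot(eU_\delta)\Big)\,d\Gamma_t\,dt=0
\]
for every $\varphi\in\cD(\bR^n\times[0,\infty))$, and the fundamental lemma of the calculus of variations forces the bracket to vanish $d\Gamma_t$-a.e., giving the first line of \eqref{g-51-10*-en}; the momentum and energy lines follow identically. The main obstacle I anticipate is not the bulk divergence-theorem computation, which is routine, but the careful treatment of the singular term: one must justify the surface transport theorem and the commutation of $\tfrac{\delta}{\delta t}$ with surface integration in this geometry—precisely the role of the appendix formulas \eqref{g-74}, \eqref{g-74-3}—and verify that the normality of $U_\delta$ lets $\tfrac{\delta e}{\delta t}+e\,\nabla_{\Gamma_t}\cdot U_\delta$ be recast in the conservative form $\tfrac{\delta e}{\delta t}+\nabla_{\Gamma_t}\cdot(eU_\delta)$ of the statement. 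The other delicate points are the orientation of $\mathbf{n}$ and the sign conventions in $[\,\cdot\,]$ and in $-G$, where bookkeeping errors could easily enter.
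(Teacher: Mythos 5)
Your proposal is correct and follows essentially the same route as the paper: integrate the regular part by parts over $\Omega^{\pm}$, kill the interior terms with the classical equations (which the paper, like you, first extracts by localizing test functions in $\Omega^{\pm}$), and handle the singular term with the surface integration-by-parts formula --- your derivation of that formula via the surface transport theorem applied to $e\varphi$ is exactly the content of the paper's Lemma~\ref{g-lem4}, whose adjoint operator $\frac{\delta^* a}{\delta t}=\frac{\delta a}{\delta t}+\nabla_{\Gamma_t}\cdot(aG\nu)$ is the conservative form you obtain from the normality of $U_{\delta}$. The only organizational difference is that the paper cites the first two conditions from earlier work and writes out only the energy identity, whereas you detail the mass identity and note the other two follow verbatim; the substance is identical.
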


\begin{proof}
The first two conditions in (\ref{g-51-10*-en}) were proved in~\cite[Theorem~9.1.]{umn}.

Let us prove the third condition in (\ref{g-51-10*-en}).
For any test function $\varphi \in {\cD}(\Omega)$ we have
$\varphi(x,t)=0$ for $(x,t)\not\in G$, ${\overline G}\subset \Omega$.
Selecting the test function $\varphi(x,t)$ with compact support in
$\Omega^{\pm}$, we deduce from the third identity in (\ref{g-4.0-10-en})
that the third relation in (\ref{g-2-en}) hold in $\Omega^{\pm}$, i.e.,
\begin{equation}
\label{g-51-10*-en-11}
\big(\frac{\rho|U|^2}{2}+H\big)_t
+\nabla\cdot\big(\big(\frac{\rho|U|^2}{2}+H\big)U\big)=0 \quad \text{for} \quad (x,t)\in \Omega^{\pm}.
\end{equation}

Now, if the test function $\varphi(x,t)$ has the support in $\Omega$, then
$$
\int_{0}^{\infty}\int \bigg(\Big(\frac{{\widehat \rho}|U|^2}{2}+{\widehat H}\Big)\varphi_{t}
+\Big(\frac{{\widehat \rho}|U|^2}{2}+{\widehat H}\Big)U\cdot\nabla\varphi\bigg)\,dx\,dt
\qquad\qquad\qquad\qquad\qquad
$$
$$
=\int_{\Omega^{-}\cap G}
\bigg(\Big(\frac{{\widehat \rho}|U|^2}{2}+{\widehat H}\Big)\varphi_{t}
+\Big(\frac{{\widehat \rho}|U|^2}{2}+{\widehat H}\Big)U\cdot\nabla\varphi\bigg)\,dx\,dt
\qquad\qquad\qquad
$$
$$
\qquad\qquad\quad
+\int_{\Omega^{+}\cap G}
\bigg(\Big(\frac{{\widehat \rho}|U|^2}{2}+{\widehat H}\Big)\varphi_{t}
+\Big(\frac{{\widehat \rho}|U|^2}{2}+{\widehat H}\Big)U\cdot\nabla\varphi\bigg)\,dx\,dt.
$$
Using the integrating-by-parts formula, we obtain
$$
\int\limits_{\Omega^{\pm}\cap G}
\bigg(\Big(\frac{{\widehat \rho}|U|^2}{2}+{\widehat H}\Big)\varphi_{t}
+\Big(\frac{{\widehat \rho}|U|^2}{2}+{\widehat H}\Big)U\cdot\nabla\varphi\bigg)\,dx\,dt
\qquad\qquad\qquad\qquad\qquad\qquad
$$
$$
=-\int\limits_{\Omega^{\pm}\cap G}
\bigg(
\Big(\frac{\rho|U|^2}{2}+H\Big)_t
+\nabla\cdot\Big(\Big(\frac{\rho|U|^2}{2}+H\Big)U\Big)\bigg)\varphi(x,t)\,dx\,dt
$$
$$
\mp\int\limits_{\Gamma\cap G}\bigg(\Big(\frac{\rho^{\pm}|U^{\pm}|^2}{2}+H^{\pm}\Big)\frac{S_t}{|\nabla_{(x,t)}S|}
+\Big(\frac{\rho^{\pm}|U^{\pm}|^2}{2}+H^{\pm}\Big)\frac{U^{\pm}\cdot\nabla S}{|\nabla_{(x,t)}S|}\bigg)
\varphi(x,t)\,d\Gamma
$$
$$
-\int\limits_{\Omega^{\pm}\cap G\cap \bR^n}\Big(\frac{{\widehat \rho}^0(x)|U^0(x)|^2}{2}+{\widehat H}^0(x)\Big)
\varphi(x,0)\,dx,
$$
where $d\Gamma$ is the surface measure on $\Gamma$.
Next, adding the latter relations and taking into account (\ref{g-51-10*-en-11}), we have
$$
\int_{0}^{\infty}\int \bigg(\Big(\frac{{\widehat \rho}|U|^2}{2}+{\widehat H}\Big)\varphi_{t}
+\Big(\frac{{\widehat \rho}|U|^2}{2}+{\widehat H}\Big)U\cdot\nabla\varphi\bigg)\,dx\,dt
\qquad\qquad\qquad\qquad\qquad
$$
$$
+\int\Big(\frac{{\widehat \rho}^0(x)|U^0(x)|^2}{2}+{\widehat H}^0(x)\Big)\varphi(x,0)\,dx
$$
\begin{equation}
\label{97-10-en}
=\int_{\Gamma}\Big(-\Bigl[\frac{\rho|U|^2}{2}+H\Bigr]G+\Bigl[\Big(\frac{\rho|U|^2}{2}+H\Big)U\Bigr]\cdot\nu\Big)
\varphi(x,t)\frac{\,d\Gamma}{\sqrt{1+G^2}}.
\end{equation}
Next, applying the integrating-by-parts formula (\ref{g-84.20}) to the second summand
in third identity (\ref{g-4.0-10-en}), one can see that
$$
\int_{\Gamma}\Big(\frac{e|U_{\delta}|^2}{2}+h\Big)
\frac{\delta\varphi}{\delta t}\frac{\,d\Gamma}{\sqrt{1+G^2}}
+\int_{\Gamma_0}\Big(\frac{e^0(x)|U_{\delta}^0(x)|^2}{2}+h^0(x)\Big)\varphi(x,0)\,d\Gamma_{0}
$$
$$
\qquad\qquad\qquad\qquad
=-\int_{\Gamma}\frac{\delta^*}{\delta t}\Big(\frac{e|U_{\delta}|^2}{2}+h\Big)
\varphi\frac{\,d\Gamma}{\sqrt{1+G^2}},
$$
where the adjoint operator $\frac{\delta^*}{\delta t}$ is defined in
(\ref{g-84.20-1}). Thus
$$
\int_{\Gamma}\Big(\frac{e|U_{\delta}|^2}{2}+h\Big)
\frac{\delta\varphi}{\delta t}\frac{\,d\Gamma}{\sqrt{1+G^2}}
+\int_{\Gamma_0}\Big(\frac{e^0(x)|U_{\delta}^0(x)|^2}{2}+h^0(x)\Big)\varphi(x,0)\,d\Gamma_{0}
$$
\begin{equation}
\label{98-10-en}
=-\int_{\Gamma}\biggl(\frac{\delta}{\delta t}\Big(\frac{e|U_{\delta}|^2}{2}+h\Big)
+\nabla_{\Gamma_t}\cdot\Big(\Big(\frac{e|U_{\delta}|^2}{2}+h\Big)G\nu\Big)\biggr)
\varphi\frac{\,d\Gamma}{\sqrt{1+G^2}}.
\end{equation}

Adding (\ref{97-10-en}) and (\ref{98-10-en}) and taking into account (\ref{g-4.0-10-en}),
(\ref{g-4.0-10*}), we derive
$$
\int_{\Gamma}\biggl(-\Bigl[\frac{\rho|U|^2}{2}+H\Bigr]U_{\delta}\cdot\nu
+\Bigl[\Big(\frac{\rho|U|^2}{2}+H\Big)U\Bigr]\cdot\nu
\qquad\qquad\qquad\qquad\qquad
$$
$$
-\frac{\delta}{\delta t}\Big(\frac{e|U_{\delta}|^2}{2}+h\Big)
-\nabla_{\Gamma_t}\cdot\Big(\Big(\frac{e|U_{\delta}|^2}{2}+h\Big)U_{\delta}\Big)\biggr)
\varphi(x,t)\frac{\,d\Gamma}{\sqrt{1+G^2}}=0,
$$
for all $\varphi \in {\cD}(\Omega)$. Thus, the third relation in
(\ref{g-51-10*-en}) holds.
\end{proof}

The right-hand sides of the equations in (\ref{g-51-10*-en}) are called the
{\em Rankine--Hugoniot deficits} in $\rho$, $\rho U$, and $\frac{\rho|U|^2}{2}+H$,
respectively.

Let $a(x,t)$ be a smooth function defined only on the surface
$\Gamma=\bigl\{(x,t):S(x,t)=0\bigr\}$ which is the restriction of
some smooth function defined in a neighborhood of $\Gamma$ in $\bR^n\times\bR$.
It is easy to prove that
\begin{equation}
\label{g-51-12*-kur} \nabla_{\Gamma_t}\cdot(aU_{\delta})=-2{\cK} Ga,
\end{equation}
where ${\cK}$ is the mean curvature of the surface $\Gamma_t$ (see
{\rm(\ref{g-84.3})} ). Indeed, according to (\ref{g-74}),
(\ref{g-74-3}), (\ref{g-84.3}), (\ref{g-4.0-10*}), we have
$\nabla_{\Gamma_t}\cdot(aU_{\delta})=\sum_{k=1}^n\frac{\delta
(Ga\nu_k)}{\delta x_k} =\sum_{k=1}^n\frac{\delta (Ga)}{\delta
x_k}\nu_k+Ga\sum_{k=1}^n\frac{\delta \nu_k}{\delta x_k} =-2{\cK}
Ga$. Here the obvious relation $\sum_{k=1}^n\frac{\delta
(Ga)}{\delta x_k}\nu_k=0$ was taken into account.

Due to (\ref{g-51-12*-kur}), the Rankine--Hugoniot conditions (\ref{g-51-10*-en})
can also be rewritten as
\begin{equation}
\label{g-51-10*-1}
\begin{array}{rcl}
\displaystyle
\frac{\delta e}{\delta t}-2{\cK} Ge&=&
\displaystyle
\bigl([\rho U]-[\rho]U_{\delta}\bigr)\cdot\nu, \\
\displaystyle
\frac{\delta (eU_{\delta})}{\delta t}-2{\cK} GeU_{\delta}&=&
\displaystyle
\bigl([\rho U\otimes U]-[\rho U]U_{\delta}\bigr)\cdot\nu, \\
\displaystyle
\frac{\delta}{\delta t}\Big(\frac{e|U_{\delta}|^2}{2}+h\Big)
-2{\cK} G\Big(\frac{e|U_{\delta}|^2}{2}+h\Big)&=&
\displaystyle
\biggl(\Big[\Big(\frac{\rho|U|^2}{2}+H\Big)U\Big] \\
&&\qquad\quad
\displaystyle
-\Big[\frac{\rho|U|^2}{2}+H\Big]U_{\delta}\biggr)\cdot\nu. \\
\end{array}
\end{equation}

\begin{remark}
\label{rem1-10} \rm
The Rankine--Hugoniot conditions (\ref{g-51-10*-en}) constitute
a system of {\em second-order} PDEs.
According to this fact, for system
(\ref{g-2-en}) we use the initial data (\ref{5.0-10-en})
which contain the {\em initial velocity} $U^{0}_{\delta}(x)$ of a $\delta$-shock.
This is similar to the fact that in the {\em measure-valued solution}
approach~\cite{B},~\cite{Li-Zh1},~\cite{Li-Y},~\cite{Y-1} the velocity $U$ is
determined on the discontinuity surface.
\end{remark}

In the direction $\nu$ the characteristic equation of system (\ref{g-2-en})
has repeated eigenvalues $\lambda=U\cdot\nu$.
So, we assume that for the initial data (\ref{5.0-10-en}) the
{\em geometric entropy condition} holds:
\begin{equation}
\label{g-4}
U^{0+}(x)\cdot \nu^0\bigr|_{\Gamma_0}
< U_{\delta}^0(x)\cdot\nu^0\bigr|_{\Gamma_0}
< U^{0-}(x)\cdot\nu^0\bigr|_{\Gamma_0},
\end{equation}
where $\nu^0=\frac{\nabla S^0(x)}{|\nabla S^0(x)|}$ is the unit
space normal of $\Gamma_0$, oriented from $\Omega^{-}_0=\{x\in \bR^{n}:S^0(x)<0\}$
to $\Omega^{+}_0=\{x\in \bR^{n}: S^0(x)>0\}$.
Similarly, we assume that for a solution of the Cauchy problem (\ref{g-2-en}),
(\ref{5.0-10-en}) the {\em geometric entropy condition} holds:
\begin{equation}
\label{g-55}
U^{+}(x,t)\cdot \nu\bigr|_{\Gamma_t}
< U_{\delta}(x,t)\cdot \nu\bigr|_{\Gamma_t}
< U^{-}(x,t)\cdot \nu\bigr|_{\Gamma_t},
\end{equation}
where $U_{\delta}$ is the velocity (\ref{g-4.0-10*}) of the
$\delta$-shock front $\Gamma_t$, $U^{\pm}$ is the velocity behind the
$\delta$-shock wave front and ahead of it, respectively.
Condition (\ref{g-55}) implies that all characteristics on both sides
of the discontinuity $\Gamma_t$ must overlap. For $t=0$ the condition
(\ref{g-55}) coincides with (\ref{g-4}).

\section{$\delta$-shock mass, momentum and energy transport relations}
\label{s3}

The classical conservation laws (\ref{7.0}) do not make sense  for a
$\delta$-shock wave type solution.
 ``Generalized'' analogs of conservation laws
(\ref{7.0}) were derived in~\cite{Al-S},~\cite{Pan-S1},~\cite{S8} for the
one-dimensional case, and in~\cite{umn} for the multidimensional case.
Now we derive these transport conservation laws for the case of system (\ref{g-2-en}).

Let us assume that a moving surface $\Gamma_{t}=\bigl\{x: S(x,t)=0\bigr\}$
permanently separates $\bR^{n}_x$ into two parts $\Omega^{\pm}_{t}=\{x\in \bR^{n}: \pm S(x,t)>0\}$,
and $\Omega^{\pm}_{0}=\{x\in \bR^{n}: \pm S^0(x)>0\}$. Let $(U, \rho, H)$ be
compactly supported with respect to~$x$. Denote by
\begin{equation}
\label{g-61-1}
M(t)=\int_{\Omega^{-}_{t}\cup\Omega^{+}_{t}}\rho(x,t)\,dx,
\quad m(t)=\int_{\Gamma_{t}}e(x,t)\,d\Gamma_{t},
\end{equation}
and
\begin{equation}
\label{g-61-2}
P(t)=\int_{\Omega^{-}_{t}\cup\Omega^{+}_{t}}\rho(x,t)U(x,t)\,dx,
\quad
p(t)=\int_{\Gamma_{t}}e(x,t)U_{\delta}(x,t)\,d\Gamma_{t},
\end{equation}
masses and momenta of the volume $\Omega^{-}_{t}\cup\Omega^{+}_{t}$ and
the moving $\delta$-shock wave front $\Gamma_{t}$, respectively, $d\Gamma_{t}$
being the surface measure on $\Gamma_{t}$.
Let
\begin{equation}
\label{g-61-2-1}
W_{kin}(t)=\int\limits_{\Omega^{-}_{t}\cup\Omega^{+}_{t}}\frac{\rho(x,t)|U(x,t)|^2}{2}\,dx,
\quad
w_{kin}(t)=\int\limits_{\Gamma_{t}}\frac{e(x,t)|U_{\delta}(x,t)|^2}{2}\,d\Gamma_{t},
\end{equation}
and
\begin{equation}
\label{g-61-2-2}
W_{int}(t)=\int_{\Omega^{-}_{t}\cup\Omega^{+}_{t}}H(x,t)\,dx,
\quad
w_{int}(t)=\int_{\Gamma_{t}}h(x,t)\,d\Gamma_{t},
\end{equation}
be the kinetic and internal energies of the volume $\Omega^{-}_{t}\cup\Omega^{+}_{t}$
and the moving wave front $\Gamma_{t}$, respectively.
Here $W_{kin}(t)+w_{kin}(t)$ and $W_{int}(t)+w_{int}(t)$ are the total kinetic and
internal energies, respectively; $W_{kin}(t)+w_{kin}(t)+W_{int}(t)+w_{int}(t)$ is
the total energy.

\begin{theorem}
\label{g-th5-en}
Let $(U, \rho, H)$ together with a discontinuity
hypersurface $\Gamma=\bigl\{(x,t): S(x,t)=0\bigr\}$ be a $\delta$-shock wave
type solution {\rm(}in the sense of Definition~{\rm\ref{g-de3-1-en})}
of the Cauchy problem {\rm (\ref{g-2-en})}, {\rm (\ref{5.0-10-en})}, where
$$
\rho(x,t)={\widehat \rho}(x,t)+e(x,t)\delta(\Gamma),
\qquad
H(x,t)={\widehat H}(x,t)+h(x,t)\delta(\Gamma).
$$
Let this solution satisfy the entropy condition {\rm (\ref{g-55})}.
Suppose that $(U, \rho, H)$ is compactly supported with respect to~$x$,
smooth in $\Omega^{\pm}=\{(x,t): \pm S(x,t)>0\}$ and has one-sided limits
$U^{\pm}$, ${\widehat \rho}^{\pm}$, ${\widehat H}^{\pm}$ on $\Gamma$.
Then the following {\em mass and momentum balance relations} hold:
\begin{equation}
\label{g-61*}
\begin{array}{rcl}
\displaystyle
\dot M(t)=-\dot m(t), \,\,\,\,\quad \dot m(t)\geq 0, &&
\qquad
\dot P(t)=-\dot p(t), \smallskip \\
\displaystyle
M(t)+m(t)=M(0)+m(0), && \qquad P(t)+p(t)=P(0)+p(0).
\end{array}
\end{equation}
\end{theorem}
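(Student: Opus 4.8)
The plan is to compute the time derivatives $\dot M(t)$ and $\dot P(t)$ by a Reynolds transport argument on the two moving bulk regions $\Omega^{\pm}_t$, and then to recognize the resulting surface integrals as exactly $-\dot m(t)$ and $-\dot p(t)$ by invoking the Rankine--Hugoniot conditions of Theorem~\ref{th1-10-en}.

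First I would differentiate $M(t)=\int_{\Omega^-_t\cup\Omega^+_t}\widehat\rho\,dx$. Since the common boundary $\Gamma_t$ moves with normal velocity $G=U_{\delta}\cdot\nu$ along $\nu$, the Reynolds transport theorem gives
$$\dot M(t)=\int_{\Omega^-_t\cup\Omega^+_t}\widehat\rho_t\,dx+\int_{\Gamma_t}\bigl(\widehat\rho^--\widehat\rho^+\bigr)G\,d\Gamma_t.$$
In each region $\widehat\rho_t+\nabla\cdot(\widehat\rho U)=0$ holds (this follows, exactly as in the proof of Theorem~\ref{th1-10-en}, from the first identity in \eqref{g-4.0-10-en} with test functions supported in $\Omega^{\pm}$), so the divergence theorem together with the compact support in $x$ converts the bulk term into $-\int_{\Gamma_t}[\rho U]\cdot\nu\,d\Gamma_t$, the opposite outer normals $\pm\nu$ of $\Omega^{\mp}_t$ assembling the jump. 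Writing $[\rho]G=[\rho]U_{\delta}\cdot\nu$ I obtain
$$\dot M(t)=-\int_{\Gamma_t}\bigl([\rho U]-[\rho]U_{\delta}\bigr)\cdot\nu\,d\Gamma_t.$$
The same computation with the second equation of \eqref{g-2-en} and the tensor divergence theorem yields $\dot P(t)=-\int_{\Gamma_t}\bigl([\rho U\otimes U]-[\rho U]U_{\delta}\bigr)\cdot\nu\,d\Gamma_t$.

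Next I would treat the front quantities $m(t)=\int_{\Gamma_t}e\,d\Gamma_t$ and $p(t)=\int_{\Gamma_t}eU_{\delta}\,d\Gamma_t$ by the transport theorem on the moving surface $\Gamma_t$, which in the Appendix notation for the $\delta$-derivative and $\nabla_{\Gamma_t}$ reads $\dot m(t)=\int_{\Gamma_t}\bigl(\frac{\delta e}{\delta t}-2\cK Ge\bigr)\,d\Gamma_t$, the factor $-2\cK G$ accounting for the rate of change of the surface measure. By \eqref{g-51-12*-kur} this equals $\int_{\Gamma_t}\bigl(\frac{\delta e}{\delta t}+\nabla_{\Gamma_t}\cdot(eU_{\delta})\bigr)\,d\Gamma_t$, which by the first Rankine--Hugoniot condition in \eqref{g-51-10*-en} equals $\int_{\Gamma_t}\bigl([\rho U]-[\rho]U_{\delta}\bigr)\cdot\nu\,d\Gamma_t$. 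Comparison with the formula for $\dot M(t)$ gives $\dot M(t)=-\dot m(t)$; the identical argument using the second Rankine--Hugoniot condition gives $\dot P(t)=-\dot p(t)$. Integrating both identities from $0$ to $t$ produces $M(t)+m(t)=M(0)+m(0)$ and $P(t)+p(t)=P(0)+p(0)$. Finally, for $\dot m(t)\ge0$ I expand the integrand as
$$\bigl([\rho U]-[\rho]U_{\delta}\bigr)\cdot\nu=\rho^-(U^--U_{\delta})\cdot\nu+\rho^+(U_{\delta}-U^+)\cdot\nu,$$
and here the entropy condition \eqref{g-55} forces both $(U^--U_{\delta})\cdot\nu>0$ and $(U_{\delta}-U^+)\cdot\nu>0$, while $\rho^{\pm}\ge0$; hence the integrand is nonnegative and $\dot m(t)\ge0$.

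The step I expect to be the main obstacle is the bookkeeping of the two transport theorems in tandem: matching the surface-divergence term $\nabla_{\Gamma_t}\cdot(eU_{\delta})$ generated on the front against the mean-curvature correction $-2\cK G$ in the surface Reynolds formula, and keeping the orientation of $\nu$ (hence every sign in the jumps $[\,\cdot\,]$ and in the boundary fluxes of $\Omega^{\pm}_t$) consistent between the bulk and the surface computations. Once these signs are pinned down by \eqref{g-51-12*-kur} and the definition $U_{\delta}=G\nu$, the argument reduces to a direct application of Theorem~\ref{th1-10-en}.
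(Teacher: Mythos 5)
Your argument is correct and follows essentially the same route the paper takes: the paper defers the proof to \cite[Theorem~9.2.]{umn} but states explicitly that it rests on the volume and surface transport Theorems~\ref{g-th8} and~\ref{g-th4-tr} together with the first two Rankine--Hugoniot conditions, which is precisely your combination of the Reynolds computation for $M,P$, the surface transport formula with the $-2\cK G$ correction for $m,p$, and the entropy condition \eqref{g-55} for $\dot m(t)\ge 0$. This also mirrors, step for step, the detailed proof the paper gives for the energy analogue (Theorem~\ref{e-th4-en}).
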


In fact, the proof of Theorem~\ref{g-th5-en} coincides with the proof
of~\cite[Theorem~9.2.]{umn}.
The proof of~\cite[Theorem~9.2.]{umn}, and, consequently, the proof of Theorem~\ref{g-th5-en}
are based on the volume and surface transport Theorems~\ref{g-th8},~\ref{g-th4-tr}
and use the first two relations in (\ref{g-51-12*-kur}).

\begin{theorem}
\label{e-th4-en}
Let $(U, \rho, H)$ together with a discontinuity hypersurface
$\Gamma=\bigl\{(x,t):\, S(x,t)=0\bigr\}$ satisfy the same conditions as in
Theorem~{\rm\ref{g-th5-en}}.
Then the following {\em energy balance relations} hold:
\begin{equation}
\label{e-61-en}
\dot w_{kin}(t)+\dot w_{int}(t)\geq 0,
\quad
\dot W_{kin}(t) \leq 0,
\quad
\dot W_{int}(t)+\dot w_{int}(t)\geq 0,
\quad \dot W_{int}(t) \leq 0.
\end{equation}
Moreover,
\begin{equation}
\label{e-61-en-11}
\begin{array}{rcl}
\displaystyle
\dot W_{kin}(t)+\dot w_{kin}(t)=-\bigl(\dot W_{int}(t)+\dot w_{int}(t)\bigr), \qquad\qquad && \smallskip \\
\displaystyle
W_{kin}(t)+w_{kin}(t)+W_{int}(t)+w_{int}(t)= \qquad\qquad\qquad\qquad && \\
\displaystyle
=W_{kin}(0)+w_{kin}(0)+W_{int}(0)+w_{int}(0).  && \\
\end{array}
\end{equation}
\end{theorem}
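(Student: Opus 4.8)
The plan is to mimic the proof of Theorem~\ref{g-th5-en}: apply the volume transport Theorem~\ref{g-th8} and the surface transport Theorem~\ref{g-th4-tr}, now to the energy densities, feeding in the Rankine--Hugoniot conditions (\ref{g-51-10*-en}). The preparatory observation is that in the smooth regions $\Omega^\pm$ the system (\ref{g-2-en}) splits the energy transport into independent pieces. Combining the mass and momentum equations gives the pressureless velocity law $U_t+(U\cdot\nabla)U=0$ wherever $\rho\neq0$, and a short computation then shows the kinetic energy density obeys $\big(\tfrac{\rho|U|^2}{2}\big)_t+\nabla\cdot\big(\tfrac{\rho|U|^2}{2}U\big)=0$ on its own; subtracting this from the third equation of (\ref{g-2-en}) leaves $H_t+\nabla\cdot(HU)=0$ in $\Omega^\pm$. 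Thus kinetic and internal energy are transported separately in the bulk.

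First I would prove the two bulk inequalities $\dot W_{kin}(t)\le0$ and $\dot W_{int}(t)\le0$. Applying Theorem~\ref{g-th8} to $W_{kin}(t)=\int_{\Omega^-_t\cup\Omega^+_t}\tfrac{\widehat\rho|U|^2}{2}\,dx$, using the bulk kinetic energy law and the divergence theorem (compact support in $x$ kills the exterior boundary), the derivative collapses to a single integral over $\Gamma_t$ of the corresponding Rankine--Hugoniot deficit. Writing $a^\pm=U^\pm\cdot\nu-G$ and $K^\pm=\tfrac{\rho^\pm|U^\pm|^2}{2}\ge0$, this integrand is $-(K^-a^--K^+a^+)$; since the entropy condition (\ref{g-55}) yields $a^->0>a^+$, it is $\le0$, so $\dot W_{kin}(t)\le0$. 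The identical argument applied to $W_{int}(t)$, using the internal energy law and $H^\pm\ge0$, gives $\dot W_{int}(t)\le0$. This settles the second and fourth inequalities in (\ref{e-61-en}).

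Next I would establish the total energy conservation (\ref{e-61-en-11}). Applying Theorem~\ref{g-th8} to $W_{kin}+W_{int}$ and Theorem~\ref{g-th4-tr} to $w_{kin}+w_{int}=\int_{\Gamma_t}\big(\tfrac{e|U_\delta|^2}{2}+h\big)\,d\Gamma_t$, the bulk contribution becomes the surface integral of the third Rankine--Hugoniot deficit, while the front contribution is exactly the left-hand side of the third condition in (\ref{g-51-10*-en}); the two cancel, giving $\tfrac{d}{dt}\big(W_{kin}+w_{kin}+W_{int}+w_{int}\big)=0$, which integrates to (\ref{e-61-en-11}). Combining this with the previous step immediately produces the first inequality in (\ref{e-61-en}), since $\dot w_{kin}+\dot w_{int}=-(\dot W_{kin}+\dot W_{int})\ge0$.

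The remaining inequality $\dot W_{int}+\dot w_{int}\ge0$ is the hard part: by (\ref{e-61-en-11}) it is equivalent to showing the total kinetic energy is non-increasing, $\dot W_{kin}+\dot w_{kin}\le0$, and this forces me to produce a kinetic energy balance \emph{on the front}. For this I would use the first two conditions of (\ref{g-51-10*-en}) in the form $L[a]:=\tfrac{\delta a}{\delta t}+\nabla_{\Gamma_t}\!\cdot(aU_\delta)=\tfrac{\delta a}{\delta t}-2\cK Ga$ of (\ref{g-51-12*-kur}), so that they read $L[e]=R_\rho$ and $L[eU_\delta]=R_{\rho U}$ with $R_\rho,R_{\rho U}$ the mass and momentum deficits. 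Expanding the second and subtracting $R_\rho U_\delta$ isolates $e\,\tfrac{\delta U_\delta}{\delta t}=R_{\rho U}-R_\rho U_\delta$, after which a direct computation gives $L\big[\tfrac{e|U_\delta|^2}{2}\big]=U_\delta\cdot R_{\rho U}-\tfrac{|U_\delta|^2}{2}R_\rho$. Integrating over $\Gamma_t$ by Theorem~\ref{g-th4-tr} produces $\dot w_{kin}$, and adding the expression for $\dot W_{kin}$ from the second step, all jump quantities recombine pointwise on $\Gamma_t$ into a perfect square:
\[
\dot W_{kin}(t)+\dot w_{kin}(t)=-\frac12\int_{\Gamma_t}\Big(\rho^-a^-|U^--U_\delta|^2-\rho^+a^+|U^+-U_\delta|^2\Big)\,d\Gamma_t.
\]
Since $\rho^\pm\ge0$ and $a^->0>a^+$, both terms are non-negative, the right-hand side is $\le0$, and the claim follows. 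The genuine obstacle is precisely this last step: organizing the surface differential calculus (the $\delta$-derivative and $\nabla_{\Gamma_t}$ together with the curvature identity (\ref{g-51-12*-kur})) so that the kinetic energy balance on the front emerges cleanly, and then checking that the bulk deficit and the front balance assemble into the sign-definite dissipation above rather than into an expression of indeterminate sign.
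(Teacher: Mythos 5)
Your proposal is correct and follows essentially the same route as the paper: the bulk splitting of the energy law into separate kinetic and internal conservation laws in $\Omega^\pm$, the volume and surface transport theorems combined with the Rankine--Hugoniot conditions (\ref{g-51-10*-en}) to get total energy conservation, the isolation of $e\,\delta U_\delta/\delta t$ from the first two front conditions to obtain the kinetic balance on $\Gamma_t$, and the entropy condition (\ref{g-55}) for all the signs. Your closing dissipation formula $-\tfrac12\int_{\Gamma_t}\bigl(\rho^-a^-|U^--U_\delta|^2-\rho^+a^+|U^+-U_\delta|^2\bigr)\,d\Gamma_t$ is exactly the paper's (\ref{e-66-en})--(\ref{e-67-en}) with the tangential and normal contributions recombined.
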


\begin{proof}
1. Let us assume that the supports of $U(x,t)$ and $\rho(x,t)$
with respect to~$x$ belong to a compact $K\in\bR^n_x$ bounded by $\partial K$.
Let $K^{\pm}_t=\Omega^{\pm}_t\cap K$.
By $\nu$ we denote, as before, the space normal to $\Gamma_{t}$ pointing from
$\Omega^{-}_t$ to $\Omega^{+}_t$.
Differentiating $W_{kin}(t)+W_{int}(t)$ and using the volume transport Theorem~\ref{g-th8}, we obtain
$$
\dot W_{kin}(t)+\dot W_{int}(t)
=\int_{K^{-}_{t}\cup K^{+}_{t}}\frac{\partial }{\partial t}\Bigl(\frac{\rho(x,t)|U(x,t)|^2}{2}+H(x,t)\Bigr)\,dx
\qquad\qquad
$$
\begin{equation}
\label{e-62-en}
+\int_{\partial K^{-}_{t}\cup\partial K^{+}_{t}}\Bigl(\frac{\rho(x,t)|U(x,t)|^2}{2}+H(x,t)\Bigr)
V(x,t)\cdot\tilde{\nu}\,d\Gamma_{t},
\end{equation}
where $\tilde{\nu}$ is the outward unit space normal to the surface $\partial K^{\pm}_{t}$
and $V(x,t)$ is the velocity of the point $x$ in $K^{\pm}_{t}$.

Next, taking into account that for $x\in K^{\pm}_t$ system (\ref{g-2-en}) has a smooth
solution $(U^{\pm},\rho^{\pm},H^{\pm})$, i.e.,
$$
\Big(\frac{\rho^{\pm}|U^{\pm}|^2}{2}+H^{\pm}\Big)_t
+\nabla\cdot\Big(\Big(\frac{\rho^{\pm}\,|U^{\pm}|^2}{2}+H^{\pm}\Big)U^{\pm}\Big)=0,
$$
and $U^{\pm}$, $\rho^{\pm}$, $H^{\pm}$ are equal to zero on the hypersurface
$\partial K^{\pm}_{t}$ except $\Gamma_{t}$, applying Gauss's divergence theorem to
relation (\ref{e-62-en}), we transform it to the form
$$
\dot W_{kin}(t)+\dot W_{int}(t)
=-\int_{K^{-}_{t}}\nabla\cdot\Big(\Big(\frac{\rho^{-}\,|U^{-}|^2}{2}+H^{-}\Big)U^{-}\Big)\,dx
\qquad\qquad\qquad\qquad
$$
$$
\qquad\quad
-\int_{K^{+}_{t}}\nabla\cdot\Big(\Big(\frac{\rho^{+}\,|U^{+}|^2}{2}+H^{+}\Big)U^{+}\Big)\,dx
+\int_{\Gamma_{t}}\Bigl[\frac{\rho \,|U|^2}{2}+H\Bigr]U_{\delta}\cdot\nu\,d\Gamma_{t}
$$
$$
=-\int_{\Gamma_{t}}\Big(\frac{\rho^{-}\,|U^{-}|^2}{2}+H^{-}\Big)U^{-}\cdot\nu \,d\Gamma_{t}
\qquad\qquad\qquad\qquad\qquad\qquad
$$
$$
\qquad
+\int_{\Gamma_{t}}\Big(\frac{\rho^{+}\,|U^{+}|^2}{2}+H^{+}\Big)U^{+}\cdot\nu\ \,d\Gamma_{t}
+\int_{\Gamma_{t}}\Bigl[\frac{\rho \,|U|^2}{2}+H\Bigr]U_{\delta}\cdot\nu\,d\Gamma_{t}
$$
\begin{equation}
\label{e-63-en}
=-\int_{\Gamma_{t}}\biggl(\Bigl[\Big(\frac{\rho\,|U|^2}{2}+H\Big)U\Bigr]
-\Bigl[\frac{\rho \,|U|^2}{2}+H\Bigr]U_{\delta}\biggr)\cdot\nu\,d\Gamma_{t},
\qquad\qquad
\end{equation}
where $U_{\delta}=V\bigl|_{\Gamma_{t}}$ is the velocity (\ref{g-4.0-10*}) of the
$\delta$-shock front $\Gamma_t$.
Using the third Rankine--Hugoniot condition (\ref{g-51-10*-en}), relation
(\ref{e-63-en}) can be rewritten as
$$
\dot W_{kin}(t)+\dot W_{int}(t)
\qquad\qquad\qquad\qquad\qquad\qquad\qquad\qquad\qquad\qquad\qquad\qquad
$$
\begin{equation}
\label{e-63-en-1}
=-\int_{\Gamma_{t}}\biggl(\frac{\delta}{\delta t}\Big(\frac{e|U_{\delta}|^2}{2}+h\Big)
+\nabla_{\Gamma_t}\cdot\Big(\Big(\frac{e|U_{\delta}|^2}{2}+h\Big)U_{\delta}\Big)\biggr)\,d\Gamma_{t}.
\end{equation}
Applying the surface transport Theorem~\ref{g-th4-tr} to the second relations in
(\ref{g-61-2-1}), (\ref{g-61-2-2}) one can see that the right-hand side of
(\ref{e-63-en-1}) coincides with $-\dot w_{kin}(t)-\dot w_{int}(t)$.
Thus relations (\ref{e-61-en-11}) hold.

Since $\rho^{\pm}\ge 0$, $H^{\pm}\ge 0$ and the solution $(U, \rho, H)$ of the Cauchy problem
(\ref{g-2-en}), (\ref{5.0-10-en}) satisfies the entropy condition (\ref{g-55}), we have
$$
\big([\rho |U|^2U]-[\rho |U|^2]U_{\delta}\big)\cdot\nu
\qquad\qquad\qquad\qquad\qquad\qquad\qquad\qquad\qquad\qquad
$$
\begin{equation}
\label{e-63-en-2}
=\big(\rho^{-}|U^{-}|^2(U^{-}-U_{\delta})\cdot\nu
+\rho^{+}|U^{+}|^2(U_{\delta}-U^{+})\cdot\nu\big)\bigr|_{\Gamma_t}\ge 0;
\end{equation}
\begin{equation}
\label{e-63-en-3}
\big([HU]-[H]U_{\delta}\big)\cdot\nu=\big(H^{-}(U^{-}-U_{\delta})\cdot\nu
+H^{+}(U_{\delta}-U^{+})\cdot\nu\big)\bigr|_{\Gamma_t}\ge 0.
\end{equation}
Formulas (\ref{e-63-en}), (\ref{e-63-en-2}), (\ref{e-63-en-3}) imply that
$\dot W_{kin}(t)+\dot W_{int}(t)\leq 0$, i.e., due to (\ref{e-61-en-11})
the first inequality in (\ref{e-61-en}) holds.

2. In fact, the second inequality in (\ref{e-61-en}) was proved in~\cite{S-2008}.

Let us calculate $\dot w(t)$.
Taking into account formula (\ref{g-51-12*-kur}), due to the surface transport
Theorem~\ref{g-th4-tr}, we obtain
$$
\dot w_{kin}(t)
=\frac{1}{2}\int_{\Gamma_{t}}\Big(\frac{\delta}{\delta t}\big(e(x,t)|U_{\delta}(x,t)|^2\big)
+\nabla_{\Gamma_t}\cdot(e(x,t)|U_{\delta}(x,t)|^2U_{\delta})\Big)\,d\Gamma_{t}
$$
$$
=\frac{1}{2}\int_{\Gamma_{t}}\Big(\frac{\delta}{\delta t}\big(e(x,t)|U_{\delta}(x,t)|^2\big)
-2{\cK} Ge(x,t)|U_{\delta}(x,t)|^2\Big)\,d\Gamma_{t}
$$
\begin{equation}
\label{e-64}
=\frac{1}{2}\int_{\Gamma_{t}}
\Big(\sum_{k=1}^n\Big(u_{\delta k}\frac{\delta(eu_{\delta k})}{\delta t}
+u_{\delta k}e\frac{\delta u_{\delta k}}{\delta t}\Big)
-2{\cK} Ge(x,t)|U_{\delta}(x,t)|^2\Big)\,d\Gamma_{t}.
\end{equation}

According to (\ref{g-51-10*-1}) and (\ref{g-51-12*-kur}), we have
\begin{equation}
\label{e-64-1}
\begin{array}{rcl}
\displaystyle
\frac{\delta e}{\delta t}\,u_{\delta k}+e\frac{\delta u_{\delta k}}{\delta t}
-2{\cK} Ge\,u_{\delta k}&=&[\rho u_{k}U\cdot\nu]-[\rho u_{k}]U_{\delta}\cdot\nu, \\
\displaystyle
\frac{\delta e}{\delta t}\,u_{\delta k}
-2{\cK} Geu_{\delta k}&=&[\rho U\cdot\nu]\,u_{\delta k}-[\rho]U_{\delta}\cdot\nu \,u_{\delta k}, \\
\end{array}
\end{equation}
where $u_{\delta k}(x,t)$ is the $k$-th component of the vector $U_{\delta}$, $k=1,\dots,n$.
Now, subtracting one equation from the other in (\ref{e-64-1}), we obtain
\begin{equation}
\label{e-64-2}
e\frac{\delta u_{\delta k}}{\delta t}
=[\rho u_{k}U\cdot\nu]-[\rho u_{k}]U_{\delta}\cdot\nu
-[\rho U\cdot\nu]u_{\delta k}+[\rho]U_{\delta}\cdot\nu u_{\delta k}.
\end{equation}
Substituting equations (\ref{e-64-2}) into (\ref{e-64}), one can easily calculate
$$
\dot w_{kin}(t)
=\frac{1}{2}\int_{\Gamma_{t}}
\Big(2\sum_{k=1}^n\big([\rho u_{k}U\cdot\nu]-[\rho u_{k}]U_{\delta}\cdot\nu\big)u_{\delta k}
\qquad\qquad\qquad\qquad
$$
$$
\qquad\qquad
-[\rho U\cdot\nu]|U_{\delta}|^2+[\rho]|U_{\delta}|^2U_{\delta}\cdot\nu\Big)\,d\Gamma_{t}.
$$
Taking into account that $U_{\delta}=G\nu$, $G=-\frac{S_{t}}{|\nabla S|}$, i.e.,
$u_{\delta k}=G\nu_k$, $k=1,\dots,n$, we rewrite the above
relation as
\begin{equation}
\label{e-65}
\dot w_{kin}(t)=\frac{1}{2}\int_{\Gamma_{t}}
\Big(2[\rho (U\cdot\nu)^2]G-3[\rho U\cdot\nu]G^2+[\rho]G^3\Big)\,d\Gamma_{t}.
\end{equation}

Using (\ref{e-63-en}) and (\ref{e-65}), we obtain
$$
\dot W_{kin}(t)+\dot w_{kin}(t)=-\frac{1}{2}\int_{\Gamma_{t}}
\big([\rho |U|^2U\cdot\nu]-[\rho |U|^2]U_{\delta}\cdot\nu
\qquad\qquad\qquad\qquad
$$
\begin{equation}
\label{e-66-en}
\qquad
-2[\rho (U\cdot\nu)^2]G+3[\rho U\cdot\nu]G^2-[\rho]G^3\big)\,d\Gamma_{t}.
\end{equation}
The gas velocity $U|_{\Gamma_{t}}$ on the wave front $\Gamma_{t}$ is
the sum of the normal component $U\cdot\nu$ and the component $U_{tan}$ tangential
to the surface $\Gamma_{t}$. Since $|U|^2|_{\Gamma_{t}}=(U\cdot\nu)^2+U_{tan}^2$,
and $G=U_{\delta}\cdot\nu$, one can represent the integrand in (\ref{e-66-en}) as
$$
\big([\rho |U|^2U\cdot\nu]-[\rho |U|^2]U_{\delta}\cdot\nu
-2[\rho (U\cdot\nu)^2]G+3[\rho U\cdot\nu]G^2-[\rho]G^3\big)
\qquad\qquad
$$
$$
\qquad
=\rho^{-}(U_{tan}^{-})^2(U^{-}\cdot\nu-U_{\delta}\cdot\nu)
+\rho^{+}(U_{tan}^{+})^2(U_{\delta}\cdot\nu-U^{+}\cdot\nu)
$$
\begin{equation}
\label{e-67-en}
\qquad\qquad
+\rho^{-}(U^{-}\cdot\nu-U_{\delta}\cdot\nu)^3
+\rho^{+}(U_{\delta}\cdot\nu-U^{-}\cdot\nu)^3.
\end{equation}
Since a solution $(U, \rho, H)$ of the Cauchy problem (\ref{g-2-en}), (\ref{5.0-10-en})
satisfies the entropy condition (\ref{g-55}) and $\rho^{\pm}\ge 0$, we deduce
that the expression (\ref{e-67-en}) is {\em non-negative}.
Formulas (\ref{e-66-en}), (\ref{e-67-en}) imply that $\dot W_{kin}(t)+\dot w_{kin}(t)\leq 0$.
Due to (\ref{e-61-en-11}), we conclude that the third inequality in (\ref{e-61-en}) holds.

3. Since $U,\rho, H$ are smooth in $\Omega^{\pm}=\{(x,t): \pm S(x,t)>0\}$, it easy to see that
for $(x,t)\in K^{\pm}_{t}$ the first and second equations in (\ref{g-2-en}) imply that
$$
(u^{\pm}_k)_t+\sum_{j=1}^n u_j\frac{\partial u^{\pm}_k}{\partial x_j}=0, \quad k=1,2,\dots,n.
$$
Multiplying the both sides of the above equation by $u^{\pm}_k$ and summarizing over $k=1,2,\dots,n$, we obtain
\begin{equation}
\label{e-67-int-2}
(|U^{\pm}|^2)_t+\sum_{j=1}^n u^{\pm}_j\frac{\partial (|U|^{\pm})^2}{\partial x_j}=0, \quad (x,t)\in K^{\pm}_{t}.
\end{equation}
According to (\ref{e-67-int-2}) and the first equation in (\ref{g-2-en})
\begin{equation}
\label{e-67-int-3}
\Bigl(\frac{\rho^{\pm}|U^{\pm}|^2}{2}\Bigr)_t
+\sum_{j=1}^n \frac{\partial}{\partial x_j}\Bigl(\frac{\rho^{\pm}|U^{\pm}|^2}{2}u^{\pm}_j\Bigr)=0,
\quad (x,t)\in K^{\pm}_{t}.
\end{equation}
In the end, from (\ref{e-67-int-3}) and the third equation in (\ref{g-2-en}) we obtain that
\begin{equation}
\label{e-67-int-4}
\big(H^{\pm}\big)_t+\nabla\cdot\big(H^{\pm} U^{\pm}\big)=0, \quad (x,t)\in K^{\pm}_{t}.
\end{equation}

Next, as before, differentiating $W_{int}(t)$, using (\ref{e-67-int-4}) and applying
the volume transport Theorem~\ref{g-th8}, we obtain
$$
\dot W_{int}(t)
=\int_{K^{-}_{t}\cup K^{+}_{t}}\frac{\partial H(x,t)}{\partial t}\,dx
+\int_{\partial K^{-}_{t}\cup\partial K^{+}_{t}}H(x,t)V(x,t)\cdot\tilde{\nu}\,d\Gamma_{t},
\qquad\qquad
$$
\begin{equation}
\label{e-67-int-1}
=-\int_{K^{-}_{t}\cup K^{+}_{t}}\nabla\cdot\big(H U\big)\,dx
+\int_{\partial K^{-}_{t}\cup\partial K^{+}_{t}}H(x,t)V(x,t)\cdot\tilde{\nu}\,d\Gamma_{t},
\end{equation}
where $\tilde{\nu}$ is the outward unit space normal to the surface $\partial K^{\pm}_{t}$
and $V(x,t)$ is the velocity of the point $x$ in $K^{\pm}_{t}$.

Taking into account that $U^{\pm}$, $\rho^{\pm}$, $H^{\pm}$ are equal to zero on the hypersurface
$\partial K^{\pm}_{t}$ except $\Gamma_{t}$ and applying Gauss's divergence theorem to
(\ref{e-67-int-1}), we transform this relation to the form
$$
\dot W_{int}(t)
=-\int_{K^{-}_{t}}\nabla\cdot\big(H^{-} U^{-}\big)\,dx
-\int_{K^{+}_{t}}\nabla\cdot\big(H^{+}U^{+}\big)\,dx
+\int_{\Gamma_{t}}\bigl[H\bigr]U_{\delta}\cdot\nu\,d\Gamma_{t}
$$
$$
\qquad
=-\int_{\Gamma_{t}}H^{-} U^{-}\cdot\nu \,d\Gamma_{t}
+\int_{\Gamma_{t}}H^{+} U^{+}\cdot\nu\ \,d\Gamma_{t}
+\int_{\Gamma_{t}}\bigl[H\bigr]U_{\delta}\cdot\nu\,d\Gamma_{t}
$$
\begin{equation}
\label{e-67-int-5}
=-\int_{\Gamma_{t}}\bigl(\bigl[H U\bigr]-\bigl[H\bigr]U_{\delta}\bigr)\cdot\nu\,d\Gamma_{t},
\qquad\qquad\qquad\qquad\qquad
\end{equation}
where $U_{\delta}=V\bigl|_{\Gamma_{t}}$ is the velocity (\ref{g-4.0-10*}) of the
$\delta$-shock front $\Gamma_t$, $\nu=\tilde{\nu}\bigl|_{\Gamma_{t}}$ is the space normal
to $\Gamma_{t}$ pointing from $K^{-}_t$ to $K^{+}_t$.
In view of the entropy condition (\ref{g-55}) the inequality  (\ref{e-63-en-3}) holds, and
consequently, (\ref{e-67-int-5}) implies the fourth inequality in (\ref{e-61-en}).
\end{proof}

\begin{corollary}
\label{cor-fin}
According to Theorems~{\rm\ref{g-th5-en},~\ref{e-th4-en}}, the mass, momentum
and energy {\em transport processes} between the volume outside of the $\delta$-shock
wave front $\Omega^{-}_{t}\cup \Omega^{+}_{t}=\{x\in \bR^{n}: S(x,t)\ne 0\}$
and the moving $\delta$-shock wave front $\Gamma_t$ {\em are going on} such that the total
mass $M(t)+m(t)$, momentum $P(t)+p(t)$ and energy $W_{kin}(t)+w_{kin}(t)+W_{int}(t)+w_{int}(t)$
are independent of time. More precisely the {\em mass and energy concentration processes}
on the moving $\delta$-shock wave front $\Gamma_t$ {\em are going on}.
In addition, the total kinetic energy $W_{kin}(t)+w_{kin}(t)$ transfers into the total
internal energy $W_{int}(t)+w_{int}(t)$.
\end{corollary}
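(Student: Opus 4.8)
The plan is to derive the Corollary directly from the balance relations already established in Theorems~\ref{g-th5-en} and~\ref{e-th4-en}; essentially no new computation is needed, and the work consists in reading off the physical meaning from the signs of the various derivatives.

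First I would dispose of the mass and momentum assertions. The identities $M(t)+m(t)=M(0)+m(0)$ and $P(t)+p(t)=P(0)+p(0)$ in~(\ref{g-61*}) state verbatim that the total mass and the total momentum are independent of $t$. Combining $\dot M(t)=-\dot m(t)$ with $\dot m(t)\ge 0$ yields $\dot M(t)\le 0$: the mass carried by the volume $\Omega^{-}_t\cup\Omega^{+}_t$ outside the front is nonincreasing, the front mass $m(t)$ is nondecreasing, and the two changes cancel exactly. This is precisely the mass concentration process onto $\Gamma_t$ with the total conserved.

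Next I would assemble the energy assertions from~(\ref{e-61-en}) and~(\ref{e-61-en-11}). The equality in~(\ref{e-61-en-11}) gives conservation of the total energy $W_{kin}(t)+w_{kin}(t)+W_{int}(t)+w_{int}(t)$. The inequality $\dot w_{kin}(t)+\dot w_{int}(t)\ge 0$ shows that the energy residing on the front is nondecreasing, i.e. energy concentrates on $\Gamma_t$. For the conversion statement I would combine $\dot W_{kin}(t)+\dot w_{kin}(t)=-\bigl(\dot W_{int}(t)+\dot w_{int}(t)\bigr)$ with $\dot W_{int}(t)+\dot w_{int}(t)\ge 0$: the total internal energy is nondecreasing while the total kinetic energy decreases by an equal and opposite amount. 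Hence the total kinetic energy is transferred into the total internal energy, their sum remaining the conserved total energy.

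Since each inequality and identity needed is supplied verbatim by the two theorems, I expect no genuine obstacle. The only care required is bookkeeping: keeping the four volume quantities $M,P,W_{kin},W_{int}$ (outside the front) separate from the four surface quantities $m,p,w_{kin},w_{int}$ (on the front), and tracking signs so that the transport is correctly described as flowing from the volume $\Omega^{-}_t\cup\Omega^{+}_t$ onto the moving surface $\Gamma_t$.
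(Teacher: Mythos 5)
Your proposal is correct and coincides with the paper's treatment: the corollary carries no separate proof there, being read off directly from the relations (\ref{g-61*}), (\ref{e-61-en}), and (\ref{e-61-en-11}) of Theorems~\ref{g-th5-en} and~\ref{e-th4-en}, exactly as you do. Your sign bookkeeping ($\dot m\ge 0$ and $\dot w_{kin}+\dot w_{int}\ge 0$ for concentration, $\dot W_{kin}+\dot w_{kin}=-(\dot W_{int}+\dot w_{int})\le 0$ for the kinetic-to-internal transfer) matches the intended reading.
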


The inequality $\dot W(t)\leq 0$ in (\ref{e-61-en}) reflects the well-known
fact that the evolution of a solution with shocks is connected with decreasing
of the kinetic energy.

\begin{remark}
\label{rem1-11} \rm
Let us suppose that in a {\it finite time} $\tilde{t}$ the whole initial mass $M(0)$ and
energy $W_{kin}(0)+W_{int}(0)$ may be concentrated on the $\delta$-shock front $\Gamma_{t}$.
Then, according to The Rankine--Hugoniot conditions, for $t>\tilde{t}$,
{\it instead of the whole system} of zero-pressure gas dynamics (\ref{g-2-en}) we obtain
exactly  a {\it ``surface'' version of this system}
$$
\frac{\delta e}{\delta t}+\nabla_{\Gamma}\cdot(eU_{\delta})=0,
$$
$$
\frac{\delta (eU_{\delta})}{\delta t}
 +\nabla_{\Gamma}\cdot(eU_{\delta}\otimes U_{\delta})=0,
$$
$$
\frac{\delta}{\delta t}\Big(\frac{e|U_{\delta}|^2}{2}+h\Big)
+\nabla_{\Gamma_t}\cdot\Big(\Big(\frac{e|U_{\delta}|^2}{2}+h\Big)U_{\delta}\Big)=0,
$$
where $U_{\delta}$ is the velocity of the $\delta$-shock front
$\Gamma_t$, $e$ is the surface density of the front mass,
$h$ is the surface density of the front internal energy.
This system is an analog of the initial system of zero-pressure gas dynamics (\ref{g-2-en})
on the $(n-1)$-- dimensional surface $\Gamma_{t}$.
This $(n-1)$ -- dimensional analog also has the same type as the initial
system, therefore its solution can develop singularities
within a finite time, and all mass concentrates on the manifold of
dimension $n-2$, and so on. Thus, it may happen that after the finite
number of bifurcations the whole initial mass will be
concentrated at the singular point.

\end{remark}

\section{Example of an one dimensional concentration process}
\label{s4}

In the 1D case we construct  an explicit example of the
concentration process based on another method. Namely, let us
consider the data that do not imply the $\delta$-shock initially:
\begin{equation}
\begin{array}{c}
\label{E1}
U^0(x)=(U^-- [U]\Theta(x))\,\chi_I(x)), \\
\rho^0(x)=(\rho^-- [\rho]\Theta(x))\,\chi_I(x), \\
H^0(x)=(H^-- [H]\Theta(x))\,\chi_I(x)),
\end{array}
\end{equation}
where $ U^-, \rho^-, H^-, U^+, \rho^+, H^+$ are constants, $\rho^-,
\rho^+, H^-, H^+>0$, $[U]>0$, $\chi_I(x)$ is the characteristic
function of the segment $I=[-L, L], L\gg 1.$ Let us note that we can
apply the standard mollification procedure to obtain  functions
smooth at the points $\pm L$, but here do not need to do it.

We obtain the solution to the Cauchy problem by means of the free
particles method~\cite{AKR}: first we assume that the particles do
not feel one others and form the overlapping domain. Then we switch
to the sticky particles model and change this overlapping domain to
a point where the mass accumulates according to the conservation of
mass and momentum. Now we have to consider the additional law of
conservation of energy. Thus, according to (\cite{AKR}), the
free-particles solution $(\rho_{FP}, U_{FP})$ to the two first
equations to the zero pressure model has the form
$$
\rho_{FP}(t,x)=\begin{cases}
\rho^-,&\text{$-L+(U^--[U])t<x<(U^--[U])t$,}\\
2\rho^--[\rho],&\text{$(U^--[U])t<x<U_-t$,}\\
\rho_--[\rho],&\text{$L+U^-t>x>U^-t$,}\\
0,& \text{otherwise},
\end{cases}
$$
$$
U_{FP}(t,x)=\begin{cases}
U^-,&\text{$-L+(U^--[U])t<x<(U^--[U])t$,}\\
U^-+\dfrac{\rho_--[\rho]}{2\rho^--[\rho]}[U],&\text{$(U^--[U])t<x<U^-t$,}\\
U^--[U],&\text{$L+U^-t>x>U^-t$.}\\
\end{cases}
$$
Outside of the segment $[-L+(U^--[U])t, L+U^-t]$ the solution
$U_{FP}$ contains a rarefaction wave, however this part of solution
does not contribute to the energy, since for the domain of
rarefaction $\rho=H=0$. The respective solution $(\rho, U)$ to the
sticky particles model is
$$
\rho(x,t)=\rho^-- [\rho]\Theta(x-x_j(t))+e(t)\delta(x-x_j(t)),
\qquad
$$
$$
U(x,t)=U^-- [U]\Theta(x-x_j(t)),
$$
where the position of the singularity  $x_j(t)$ is the following:
\begin{equation}
\label{x(t)[f]}
x_j(t)=\frac{[U\rho]\,
-\,\sqrt{(-[U\rho]^2+[\rho][U^2\rho])}}{[\rho]}\,t, \quad
\mbox{if} \quad [\rho]\ne 0,
\end{equation}
and
\begin{equation}
\label{x(t)[0]}
x_j(t)\,=\,\frac{2U^--[U]}{2}\,t\,=\,\frac{U^-+U^+}{2}\,t,
 \quad \mbox{if} \quad [\rho]\,=\,0.
\end{equation}
The amplitude of the $\delta$-shock reads as
$$
e(t)=[U\rho]\,t\,-\,[\rho]\,x_j(t).
$$
The solution induces the following balance of energy.

For the sake of simplicity we dwell on the latter case. Thus,
$$
W_{int}(t)=(x_j(t) +L-U^-t)H^-+(L+(U^--[U])t-x_j(t))H^+
\qquad\qquad
$$
$$
=W_{int}(0)-\frac{H^++H^-}{2} \,[U]\,t,
\qquad\qquad\qquad\qquad\qquad\qquad
$$
$$
W_{kin}(t)=\frac{1}{2}\Big((x_j(t) +L-U^-t)\rho^-(U^-)^2
\qquad\qquad
$$
$$
\qquad\qquad\qquad\qquad
+(L+(U^--[U])t-x_j(t))\rho^+(U^+)^2\Big)
$$
$$
=W_{kin}(0)-\rho [U]\frac{(U^1)^2+(U^2)^2}{4}\,t,
$$
$$
w_{kin}=\frac{1}{2}[U]\rho \left(\frac{(U^1)+(U^2)}{2}\right)^2\,t,
$$
$$
w_{int}=W_{int}(0)+W_{kin}(0)-W_{int}(t)-W_{kin}(t)-w_{kin}(t)=$$$$
\frac{[U]}{2}\,\left(\frac{\rho[U]^2}{4}+(H^-+H^+)\right)\,t.
$$
We see that $W_{kin}(t)$ and $W_{int}(t)$ decrease with a constant
velocity, and vanish within a finite time, $w_{kin}{t}$ increases
unless $U^-+U^+\ne 0,$ (in the latter case $w_{kin}(t)\equiv 0$),
$w_{int}(t)$ increases {\it in any case}. Since we associate the
internal energy with a temperature, it signifies that the
concentration process {\it always} entails the heating of point of
the mass accumulation and cooling-down of the environment to the
"absolute zero" that relates to the zero internal energy.

\appendix

\section{Some auxiliary facts}
\label{s6}

\subsection{Moving surfaces of discontinuity.}\label{s5.1}
Let us present some results from~\cite[5.2.]{Kan} concerning moving surfaces.
Let $\Gamma_t$ be a smooth moving surface of codimension~1 in the
space $\bR^n$. Such a surface can be represented locally either
in the form $\Gamma_t=\bigl\{x\in \bR^n: S(x,t)=0\bigr\}$, or in
terms of the curvilinear Gaussian coordinates $s=(s_1,\dots,s_{n-1})$
on the surface:
$$
x_j=x_j(s_1,\dots,s_{n-1},t), \qquad s\in \bR^{n-1}.
$$
We also consider the surface $\Gamma=\bigl\{(x,t)\in \bR^{n+1}: S(x,t)=0\bigr\}$
as a submanifold of the space-time $\bR^n\times\bR$.
We shall assume that $\nabla S(x,t)\bigr|_{\Gamma_t}\neq 0$ for all fixed values
of $t$, where $\nabla
=\big(\frac{\partial}{\partial x_1},\dots,\frac{\partial}{\partial x_n}\big)$.
Let $\nu$ be the unit space normal to the surface $\Gamma_t$ pointing in
the positive direction such that
$\frac{\partial S}{\partial x_j}=|\nabla S|\nu_j$, \ $j=1,\dots,n$.

Let $f(x,t)$ be a function defined on the surface $\Gamma_t$
for some time interval, and denote by $\frac{\delta f}{\delta t}$
the derivative with respect to time $t$ as it would be computed by an
observer moving with the surface. This derivative has the following
geometrical interpretation. Let $M_0$ be a point on the surface at
the time $t=t_0$. Construct the normal line to the surface at $M_0$.
At the time $t=t_0+\Delta t$, \, $\Delta t$ is sufficiently small, this normal
meets the surface $\Gamma_{t_0+\Delta t}$ at the point $M=M(t_0+\Delta t)$.
Then the $\delta$-derivative is defined as
\begin{equation}
\label{g-70}
\frac{\delta f(M_0,t_0)}{\delta t}=\lim_{\Delta t \to 0}
\frac{f(M)-f(M_0)}{\Delta t}.
\end{equation}
If $\Delta s$ is the distance between $M_0$ and $M$, then
\begin{equation}
\label{g-71}
G=\lim_{\Delta t \to 0}\frac{\Delta s}{\Delta t}
\end{equation}
is the {\em normal velocity of the moving surface} $\Gamma_t$ and
\begin{equation}
\label{g-72}
\frac{\delta x_j}{\delta t}=\lim_{\Delta t \to 0}\frac{\Delta x_j}{\Delta t}
=\lim_{\Delta t \to 0}\frac{\Delta s}{\Delta t}\frac{\Delta x_j}{\Delta s}
=G\nu_j, \quad j=1,\dots,n.
\end{equation}

Since it is essential that the $\delta$-derivative is
computed on a surface, and $S$ remains constant on this surface, then
$\frac{\delta S}{\delta t}=0$. Thus we have
$$
0=\frac{\delta S}{\delta t}=\frac{\partial S}{\partial t}
+\sum_{j=1}^n\frac{\delta S}{\delta x_j}\frac{\delta x_j}{\delta t}
=\frac{\partial S}{\partial t}+\sum_{j=1}^nG|\nabla S|\nu_j^2,
$$
i.e.,
\begin{equation}
\label{g-73}
S_t=-G|\nabla S|.
\end{equation}
From this formula we can see that $-G=\frac{S_t}{|\nabla S|}$ can be
interpreted as the time component of the normal vector.

The space-time unit normal to the surface $\Gamma$ is given by
${\bf n}=\frac{(\nu,-G)}{\sqrt{1+G^2}}$, where
$\sqrt{1+G^2}=\frac{|\nabla_{(x,t)} S|}{|\nabla S|}$, \
$\nabla_{(x,t)}=\big(\nabla,\frac{\partial}{\partial t}\big)$.

If $f(x,t)$ is a function defined only on $\Gamma$, its first
order $\delta$-derivatives with respect to the time and space variables
are defined by the following formulas~\cite[5.2.(15),(16)]{Kan}:
\begin{equation}
\label{g-74}
\frac{\delta f}{\delta t}\stackrel{def}{=}\frac{\partial \widetilde{f}}{\partial t}
+G\frac{\partial\widetilde{f}}{\partial\nu},
\qquad
\frac{\delta f}{\delta x_j}\stackrel{def}{=}\frac{\partial\widetilde{f}}{\partial x_j}
-\nu_j\frac{\partial\widetilde{f}}{\partial\nu},
\quad j=1,\dots,n,
\end{equation}
where $\widetilde{f}$ is a smooth extension of $f$ to a neighborhood
of $\Gamma$ in $\bR^n\times\bR$, \ $j=1,\dots,n$, and
$\frac{\partial\widetilde{f}}{\partial\nu}=\nu\cdot\nabla\widetilde{f}$
is the normal derivative.
Since $\delta$-derivatives are independent of the way of extension of the
function $f$, we shall drop tilde from the function $f$.
Thus the gradient tangent to the surface $\Gamma_t$ is defined as
\begin{equation}
\label{g-74-3}
\nabla_{\Gamma_t}=\nabla-\nabla_{\nu}
=\Big(\frac{\delta}{\delta x_1},\dots,\frac{\delta}{\delta x_n}\Big),
\end{equation}
where $\nabla_{\nu}=\nu\big(\nu\cdot\nabla\big)$ is the gradient
along the normal direction to $\Gamma_t$.
The {\it mean curvature} of the surface $\Gamma_t$ is defined as
\begin{equation}
\label{g-84.3}
{\cK}\stackrel{def}{=}-\frac{1}{2}\nabla_{\Gamma_t}\cdot\nu
=-\frac{1}{2}\sum_{j=1}^n\frac{\delta \nu_j}{\delta x_j}
=-\frac{1}{2}\nabla\cdot\nu.
\end{equation}

\subsection{Distributions defined on a surface.}\label{s5.2}
The Heaviside function $H(S)$ is introduced by the following definition:
$$
\big\langle H(S), \ \varphi(\cdot,\cdot)\big\rangle=\int_{S \ge 0}\varphi(x,t)\,dx\,dt,
\quad \forall \, \varphi \in {\cD}(\bR^n\times\bR).
$$
According to~\cite[5.3.(1),(2)]{Kan}, we introduce the delta function
$\delta(S)$ on the surface $\Gamma$:
\begin{equation}
\label{g-106}
\big\langle \delta(S), \ \varphi(\cdot,\cdot) \big\rangle
=\int_{-\infty}^{\infty}\int_{\Gamma_t}\varphi(x,t)\,d\Gamma_t\,dt
=\int_{\Gamma}\varphi(x,t)\frac{\,d\Gamma}{\sqrt{1+G^2}},
\end{equation}
for all $\varphi\in {\cD}(\bR^n\times\bR)$, where $d\Gamma_t$ and
$d\Gamma$ are the surface measures on the surfaces $\Gamma_t$ an $\Gamma$,
respectively.
According to~\cite[5.5.Theorem~1.]{Kan}, we have
\begin{equation}
\label{g-84*-h}
\frac{\partial H(S)}{\partial x_j}=\nu_j\delta(S),
\qquad
\frac{\partial H(S)}{\partial t}=-G\delta(S).
\end{equation}

\subsection{An integration-by-parts formula.}\label{s5.3}
We need the following integrating-by-parts formula.

\begin{lemma}
\label{g-lem4}
{\rm(~\cite[Lemma~9.1.]{umn}, cf.~\cite[5.2.(25),(26)]{Kan})}
Suppose that $a(x,t)$ is a smooth function defined
only on the surface $\Gamma=\bigl\{(x,t):S(x,t)=0\bigr\}$ which is
the restriction of some smooth function defined in a neighborhood
of $\Gamma$ in $\bR^n\times\bR$, and $\Gamma_0=\bigl\{x: S(x,0)=0\bigr\}$.
Then the following formula for integration by parts holds:
\begin{equation}
\label{g-84.20}
\int_{\Gamma}a \frac{\delta\varphi}{\delta t}\frac{\,d\Gamma}{\sqrt{1+G^2}}=
-\int_{\Gamma}\frac{\delta^* a}{\delta t}\varphi\frac{\,d\Gamma}{\sqrt{1+G^2}}
-\int_{\Gamma_0}a(x,0)\varphi(x,0)\,d\Gamma_{0},
\end{equation}
for any $\varphi \in {\cD}(\bR^n\times [0, \infty))$,
where $\frac{\delta^*}{\delta t}$ is the adjoint operator defined as
\begin{equation}
\label{g-84.20-1}
\frac{\delta^* a}{\delta t}=\frac{\delta a}{\delta t}-2{\cK}Ga
=\frac{\delta a}{\delta t}+\nabla_{\Gamma_t}\cdot(aG\nu),
\end{equation}
${\cK}$ is the mean curvature {\rm(\ref{g-84.3})} of the surface $\Gamma_t$.
\end{lemma}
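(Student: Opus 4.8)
The plan is to reduce the surface integral over $\Gamma$ to an iterated integral in which the $\delta$-derivative becomes an ordinary time derivative, and then integrate by parts in $t$. First I would use the representation (\ref{g-106}) of $\delta(S)$ to rewrite the left-hand side as $\int_{\Gamma}a\frac{\delta\varphi}{\delta t}\frac{d\Gamma}{\sqrt{1+G^2}}=\int_0^\infty\int_{\Gamma_t}a\,\frac{\delta\varphi}{\delta t}\,d\Gamma_t\,dt$, the lower limit being $0$ because $\varphi$ is supported in $t\ge 0$, and the $t$-range being finite by compactness of $\mathrm{supp}\,\varphi$. Since the operator $\frac{\delta}{\delta t}=\partial_t+G\,\partial_\nu$ of (\ref{g-74}) is a first-order derivation, it obeys the Leibniz rule, so on $\Gamma$ one has $a\,\frac{\delta\varphi}{\delta t}=\frac{\delta(a\varphi)}{\delta t}-\varphi\,\frac{\delta a}{\delta t}$; here I would first extend $a$ smoothly off $\Gamma$, using that $\delta$-derivatives are independent of the extension.

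The crux is to pull the time $\delta$-derivative outside the inner surface integral. I would invoke the surface transport identity $\frac{d}{dt}\int_{\Gamma_t}g\,d\Gamma_t=\int_{\Gamma_t}\big(\frac{\delta g}{\delta t}-2\cK G g\big)\,d\Gamma_t$, in which the term $-2\cK G$ accounts for the rate of change of the surface element $d\Gamma_t$ carried along the normal flow $G\nu$; this factor is exactly (\ref{g-51-12*-kur}), namely $\nabla_{\Gamma_t}\cdot(G\nu)=-2\cK G$, combined with (\ref{g-84.3}). Applying this with $g=a\varphi$ and integrating over $t\in[0,\infty)$, the total-derivative term telescopes by the fundamental theorem of calculus: the contribution at $t=\infty$ vanishes because $\varphi$ has compact support, while the contribution at $t=0$ produces precisely $-\int_{\Gamma_0}a(x,0)\varphi(x,0)\,d\Gamma_0$.

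Combining the two steps, the left-hand side equals $-\int_{\Gamma_0}a(x,0)\varphi(x,0)\,d\Gamma_0+\int_0^\infty\int_{\Gamma_t}\big(2\cK G a-\frac{\delta a}{\delta t}\big)\varphi\,d\Gamma_t\,dt$. The coefficient of $\varphi$ is $-\big(\frac{\delta a}{\delta t}-2\cK G a\big)=-\frac{\delta^* a}{\delta t}$ by the definition (\ref{g-84.20-1}), and converting the remaining iterated integral back through (\ref{g-106}) yields the asserted formula (\ref{g-84.20}). The main obstacle is the transport identity invoked in the second step: it rests on the first-variation-of-area formula, namely that the induced surface measure evolves under the normal motion with relative rate $-2\cK G$. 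If this is not yet available at this stage, I would establish it directly, either by differentiating the Gram determinant of the induced metric in the Gaussian coordinates $x=x(s,t)$ of Section~\ref{s5.1}, or distributionally from (\ref{g-84*-h}) together with (\ref{g-106}); everything else is the Leibniz rule and the fundamental theorem of calculus.
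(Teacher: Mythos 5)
Your argument is correct: reducing the left-hand side to the iterated integral via (\ref{g-106}), applying the Leibniz rule for $\frac{\delta}{\delta t}$, invoking the surface transport theorem (\ref{g-84.20-tr}) with $g=a\varphi$, and collecting the $t=0$ boundary term yields exactly (\ref{g-84.20}) with the adjoint (\ref{g-84.20-1}), the two forms of which agree by (\ref{g-51-12*-kur}). The paper itself does not prove this lemma but cites it from \cite[Lemma~9.1]{umn}; your proof is the standard one assembled precisely from the ingredients the paper provides in its appendix, so it matches the intended argument.
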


\subsection{Transport theorems.}\label{s5.4}
Here we give the following {\it transport theorems\/}.

\begin{theorem}
\label{g-th8}
{\rm(\cite[12.8.(3)]{Kan})}
Let $f(x,t)$ be a sufficiently smooth function defined in a moving solid
$\Omega_t$, and let a moving hypersurface $\partial \Omega_t$ be its boundary.
Let $\nu$ be the outward unit space normal to the surface $\partial \Omega_t$
and $V(x,t)$ be the velocity of the point $x$ in $\Omega_t$. Then the volume
transport theorem holds:
$$
\frac{d}{dt}\int_{\Omega_t}f(x,t)\,dx
=\int_{\Omega_t}\frac{\partial f}{\partial t}\,dx
+\int_{\partial \Omega_t}f V\cdot\nu \,d\Gamma_{t}
\qquad\quad
$$
\begin{equation}
\label{g-84.1*}
\qquad
=\int_{\Omega_t}\Big(\frac{\partial f}{\partial t}+{\rm div}(fV)\Big)\,dx,
\end{equation}
where $d\Gamma_t$ is the surface measure on the moving surface $\partial \Omega_t$.
\end{theorem}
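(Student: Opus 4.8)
The plan is to reduce the moving domain to a fixed reference configuration by a Lagrangian change of variables, so that the time derivative can be carried out under an integral sign over a time-independent region. First I would introduce the flow map $\Phi(X,t)$ of the velocity field $V$, determined by $\partial_t\Phi(X,t)=V(\Phi(X,t),t)$ together with an initialization at some fixed time $t_0$, so that the material description $\Omega_t=\Phi(\Omega_{t_0},t)$ holds and each point of $\Omega_t$ moves precisely with the prescribed velocity $V$. Setting $x=\Phi(X,t)$ with Jacobian $J(X,t)=\det D_X\Phi(X,t)$, the change of variables formula gives
\[
\int_{\Omega_t} f(x,t)\,dx=\int_{\Omega_{t_0}} f\big(\Phi(X,t),t\big)\,J(X,t)\,dX,
\]
where the domain of integration on the right is now independent of~$t$.

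Next I would differentiate under the integral sign, which is legitimate because $\Omega_{t_0}$ is fixed and $f,\Phi$ are smooth. By the chain rule $\frac{d}{dt}f(\Phi(X,t),t)=\big(\partial_t f+V\cdot\nabla f\big)(\Phi(X,t),t)$, so
\[
\frac{d}{dt}\int_{\Omega_t} f\,dx=\int_{\Omega_{t_0}}\Big[\big(\partial_t f+V\cdot\nabla f\big)\,J+f\,\partial_t J\Big]\,dX.
\]
The crucial ingredient is Euler's expansion formula $\partial_t J=J\cdot({\rm div}\,V)(\Phi(X,t),t)$, obtained from Jacobi's identity $\frac{d}{dt}\det A=\det A\cdot\operatorname{tr}(A^{-1}\dot A)$ applied to $A=D_X\Phi$ together with the relation $\partial_t D_X\Phi=(D_xV)\,D_X\Phi$. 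Substituting this and regrouping by the product rule ${\rm div}(fV)=V\cdot\nabla f+f\,{\rm div}\,V$, the integrand collapses to $\big(\partial_t f+{\rm div}(fV)\big)J$.

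Finally I would change variables back from $X$ to $x$ to return to the moving domain $\Omega_t$, obtaining
\[
\frac{d}{dt}\int_{\Omega_t} f\,dx=\int_{\Omega_t}\Big(\frac{\partial f}{\partial t}+{\rm div}(fV)\Big)\,dx,
\]
which is the second equality in {\rm(\ref{g-84.1*})}. The first equality then follows by applying Gauss's divergence theorem to the vector field $fV$ on $\Omega_t$, splitting $\int_{\Omega_t}{\rm div}(fV)\,dx=\int_{\partial\Omega_t} fV\cdot\nu\,d\Gamma_t$ with $\nu$ the outward unit space normal. The only genuinely nonroutine step is Euler's expansion formula for the Jacobian; everything else is the change of variables, differentiation under a fixed-domain integral, and the divergence theorem.
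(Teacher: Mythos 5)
Your proof is correct. The paper itself offers no proof of this statement --- it is quoted verbatim from Kanwal's book \cite[12.8.(3)]{Kan} as a known auxiliary fact --- so there is no internal argument to compare against; your derivation via the Lagrangian flow map $\Phi$, differentiation over the fixed reference domain, and Euler's expansion formula $\partial_t J = J\,\mathrm{div}\,V$ is the standard and complete proof of the Reynolds volume transport theorem, with the first equality in (\ref{g-84.1*}) recovered from the second by the divergence theorem exactly as you say.
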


\begin{theorem}
\label{g-th4-tr}
{\rm(~\cite[12.8.(9)]{Kan})}
If $e(x,t)$ is a smooth function defined only on the moving surface
$\Gamma_t=\bigl\{x: S(x,t)=0\bigr\}$ {\rm(}which is the restriction of
some smooth function defined in a neighborhood of $\Gamma_t${\rm)},
then the surface transport theorem holds:
$$
\frac{d}{dt}\int_{\Gamma_t}e(x,t)\,d\Gamma_{t}
\qquad\qquad\qquad\qquad\qquad\qquad\qquad\qquad\qquad\qquad
$$
\begin{equation}
\label{g-84.20-tr}
=\int_{\Gamma_t}\Big(\frac{\delta e}{\delta t}-2{\cK} Ge\Big)\,d\Gamma_{t}
=\int_{\Gamma_t}\Big(\frac{\delta e}{\delta t}
+\nabla_{\Gamma_t}\cdot(eU_{\delta})\Big)\,d\Gamma_{t},
\end{equation}
where $U_{\delta}=\nu G$ is the velocity of $\Gamma_{t}$ given by~{\rm(\ref{g-4.0-10*})}.
\end{theorem}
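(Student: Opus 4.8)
The plan is to work in local Gaussian coordinates $s=(s_1,\dots,s_{n-1})$ moving with the surface, so that $\Gamma_t$ is described by $x=x(s,t)$ and the surface measure is $d\Gamma_t=\sqrt{g}\,ds$, where $g=\det(g_{\alpha\beta})$ and $g_{\alpha\beta}=\frac{\partial x}{\partial s_\alpha}\cdot\frac{\partial x}{\partial s_\beta}$ is the first fundamental form. Writing $\int_{\Gamma_t}e\,d\Gamma_t=\int e(x(s,t),t)\sqrt{g(s,t)}\,ds$ and differentiating under the integral sign with the parameters $s$ held fixed reduces the problem to computing two pieces: the rate of change of $e$ following a fixed parameter point, and the rate of change of the area element $\sqrt{g}$.

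First I would relate the parametric time derivative of $e$ to the $\delta$-derivative (\ref{g-74}). The velocity of a parametrized point is $w=\frac{\partial x}{\partial t}\big|_s$; differentiating the identity $S(x(s,t),t)=0$ and using (\ref{g-73}) gives $w\cdot\nu=G$, so $w=G\nu+w_{tan}$ with $w_{tan}$ tangential. Hence, for a smooth extension $\widetilde e$,
$$
\frac{d}{dt}e(x(s,t),t)=\frac{\partial\widetilde e}{\partial t}+w\cdot\nabla\widetilde e
=\Big(\frac{\partial\widetilde e}{\partial t}+G\frac{\partial\widetilde e}{\partial\nu}\Big)+w_{tan}\cdot\nabla_{\Gamma_t}e
=\frac{\delta e}{\delta t}+w_{tan}\cdot\nabla_{\Gamma_t}e,
$$
the middle step using that $w_{tan}$ is tangential, so $w_{tan}\cdot\nabla=w_{tan}\cdot\nabla_{\Gamma_t}$ (see (\ref{g-74-3})). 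For the area element I would invoke Jacobi's formula $\frac{\partial}{\partial t}\sqrt{g}=\tfrac12\sqrt{g}\,g^{\alpha\beta}\frac{\partial g_{\alpha\beta}}{\partial t}$. Since $\frac{\partial g_{\alpha\beta}}{\partial t}=\frac{\partial x}{\partial s_\alpha}\cdot\frac{\partial w}{\partial s_\beta}+\frac{\partial w}{\partial s_\alpha}\cdot\frac{\partial x}{\partial s_\beta}$, this collapses to the surface divergence, giving $\frac{\partial}{\partial t}\sqrt{g}=\sqrt{g}\,\nabla_{\Gamma_t}\cdot w$.

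Combining the two pieces,
$$
\frac{d}{dt}\int_{\Gamma_t}e\,d\Gamma_t
=\int_{\Gamma_t}\Big(\frac{\delta e}{\delta t}+w_{tan}\cdot\nabla_{\Gamma_t}e+e\,\nabla_{\Gamma_t}\cdot(G\nu)+e\,\nabla_{\Gamma_t}\cdot w_{tan}\Big)\,d\Gamma_t,
$$
and the two tangential contributions assemble into $\nabla_{\Gamma_t}\cdot(e\,w_{tan})$, a surface divergence of a tangential field, whose integral over the closed (or compactly supported) surface $\Gamma_t$ vanishes by the surface divergence theorem. The remaining normal term is handled by (\ref{g-84.3}): since $\nabla_{\Gamma_t}G$ is tangential and $\nu$ is normal, $\nabla_{\Gamma_t}\cdot(G\nu)=G\,\nabla_{\Gamma_t}\cdot\nu=-2\cK G$. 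This yields the first equality in (\ref{g-84.20-tr}); the second follows immediately from (\ref{g-51-12*-kur}) with $a=e$ and $U_\delta=G\nu$.

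The main obstacle I anticipate is bookkeeping the difference between the two natural notions of ``moving with the surface'': the $\delta$-derivative of (\ref{g-74}) tracks the normal line, whereas differentiating at fixed parameter $s$ tracks a point that may drift tangentially. Keeping these straight is precisely what produces the extra tangential transport $w_{tan}\cdot\nabla_{\Gamma_t}e$ in the derivative of $e$ and the matching $e\,\nabla_{\Gamma_t}\cdot w_{tan}$ in the derivative of the area element; the crucial observation is that these two parametrization-dependent terms combine into an exact surface divergence and therefore drop out after integration, leaving an intrinsic, parametrization-independent formula. A secondary point requiring care is justifying the vanishing of $\int_{\Gamma_t}\nabla_{\Gamma_t}\cdot(e\,w_{tan})\,d\Gamma_t$, which uses that $\Gamma_t$ is closed (boundaryless) or that $e$ is compactly supported, as is the case in all applications in the paper.
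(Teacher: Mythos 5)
Your argument is correct, but note that the paper itself contains no proof of Theorem~\ref{g-th4-tr}: the result is imported verbatim from Kanwal~\cite[12.8.(9)]{Kan}, whose machinery (Thomas' calculus of moving surfaces, the $\delta$-derivative, the mean curvature identity) the appendix merely restates. So the comparison is really with the cited source rather than with an internal proof, and your derivation is the standard differential-geometric one, fully consistent with the appendix formulas: you use (\ref{g-73}) to get $w\cdot\nu=G$ for the parametric velocity $w$, Jacobi's formula to get $\partial_t\sqrt{g}=\sqrt{g}\,\nabla_{\Gamma_t}\cdot w$, and the same computation the paper performs when proving (\ref{g-51-12*-kur}) (namely $\nabla_{\Gamma_t}\cdot(G\nu)=G\,\nabla_{\Gamma_t}\cdot\nu=-2{\cK}G$, since $(\nabla_{\Gamma_t}G)\cdot\nu=0$ and (\ref{g-84.3}) holds). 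Two remarks. First, the bookkeeping you identify as the main obstacle can be shortcut entirely: choose the Gaussian coordinates convected along the \emph{normal} trajectories, i.e.\ exactly the trajectories that define the $\delta$-derivative in (\ref{g-70})--(\ref{g-72}); then $w_{tan}\equiv 0$, the parametric derivative of $e$ \emph{is} $\frac{\delta e}{\delta t}$, one gets $\partial_t\sqrt{g}=-2{\cK}G\sqrt{g}$ pointwise, and the identity (\ref{g-84.20-tr}) holds at the level of integrands with no appeal to the surface divergence theorem and hence no closedness or compact-support hypothesis. Second, in the generality you chose (arbitrary parametrization), the vanishing of $\int_{\Gamma_t}\nabla_{\Gamma_t}\cdot(e\,w_{tan})\,d\Gamma_t$ does require $\Gamma_t$ boundaryless or $e$ compactly supported, as you correctly flag; this is harmless for the paper, since in Theorems~\ref{g-th5-en} and~\ref{e-th4-en} the solution is compactly supported in $x$, so $e$ has compact support on $\Gamma_t$. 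What your route buys is parametrization-independence made explicit (the discarded term is an exact tangential divergence); what the convected-coordinate route buys is a pointwise identity valid without global hypotheses on $\Gamma_t$.
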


\bibliographystyle{amsalpha}

\end{document}